\journal{Physica D}
\newtheorem{proposition}{Proposition}
\newtheorem{lemma}{Lemma}
\begin{document}
\allowdisplaybreaks
\begin{frontmatter}
\title{Energy and potential enstrophy flux constraints in  quasi-geostrophic models}
\author[drlf]{Eleftherios Gkioulekas}
\ead{drlf@hushmail.com}
\address[drlf]{University of Texas-Pan American, Department of Mathematics, 1201 West University Drive, Edinburg, TX 78539-2999}

\begin{abstract}
We investigate an inequality constraining the energy and potential enstrophy flux spectra in two-layer and multi-layer quasi-geostrophic models. Its physical significance is that it can diagnose whether any given multi-layer model  that allows co-existing downscale cascades  of energy and potential enstrophy can allow the downscale energy flux to become large enough to yield a mixed energy spectrum where the dominant $k^{-3}$ scaling is overtaken by a subdominant $k^{-5/3}$  contribution beyond a transition wavenumber $k_t$ situated in the inertial range. The validity of the flux inequality implies that this scaling transition cannot occur within the inertial range, whereas a violation of the flux inequality beyond some wavenumber $k_t$  implies the existence of a scaling transition near that wavenumber. This flux inequality holds unconditionally in two-dimensional Navier-Stokes turbulence, however, it is far from obvious that it continues to hold in multi-layer quasi-geostrophic models, because the dissipation rate spectra for energy and potential enstrophy no longer relate in a trivial way, as  in two-dimensional Navier-Stokes. We derive the general form of the energy and potential enstrophy dissipation rate spectra for a generalized symmetrically coupled multi-layer model. From this result, we prove that in a symmetrically coupled multi-layer quasi-geostrophic model, where the dissipation terms for each layer consist of the same Fourier-diagonal linear operator  applied on  the streamfunction field  of only the same layer, the flux inequality continues to hold. It follows that a necessary condition to violate the flux inequality is the use of asymmetric dissipation where different operators are used on different layers. We explore dissipation asymmetry further in the context of a two-layer quasi-geostrophic model and derive upper bounds on the asymmetry that will allow the flux inequality to  continue to hold. Asymmetry is introduced both via an extrapolated Ekman term, based on a 1980 model by Salmon, and via differential small-scale dissipation. The results given are mathematically rigorous and require no phenomenological assumptions about the inertial range. Sufficient conditions for violating the flux inequality, on the other hand, require phenomenological hypotheses, and will be explored in future work.
\end{abstract}

\begin{keyword}
two-dimensional turbulence \sep
quasi-geostrophic turbulence \sep
two-layer quasi-geostrophic model \sep
flux inequality
\end{keyword}
\end{frontmatter}


\section{Introduction}

It is now well-known that in two-dimensional Navier-Stokes  turbulence,  most of the energy tends to go towards larges scales and most of the enstrophy tends to go towards small scales, sometimes forming an upscale inverse energy cascade with energy spectrum scaling as $k^{-5/3}$ and a downscale enstrophy cascade with $k^{-3}$ scaling \cite{article:Kraichnan:1967:1,article:Leith:1968,article:Batchelor:1969}, where $k$ is the wavenumber. Kraichnan \cite{article:Kraichnan:1967:1} argued, differently from \Fjortoft\ \cite{article:Fjortoft:1953}, that the direction of the two cascades can be justified via a thermodynamic argument in which we introduce, without proof, the assumption that the energy and enstrophy fluxes should tend to revert the energy spectrum from a cascade configuration to the absolute equilibrium configuration. The existence of forcing and dissipation arrests this tendency, thus keeping the system locked in a steady-state forced-dissipative configuration away from absolute equilibrium.

Less well-known is the fact that there is a serious error with the original \Fjortoft argument: \Fjortoft  claimed that the twin detailed conservation laws of energy and enstrophy alone imply that in every triad interaction group, more energy is transferred upscale than downscale. However, a more rigorous analysis shows that there exist triad interaction groups in which more energy is sent downscale than upscale, and it is not obvious, without additional considerations,  which group is dominant \cite{article:Warn:1975,article:Tung:2006}. Aside from this matter, the fundamental problem that underlies every other proof that utilizes only the twin conservation laws of enstrophy and energy, is that an additional assumption needs to be introduced to overcome the symmetry of the Euler equations under time reversal.  Typical assumptions, such as the tendency of the energy spectrum to revert to absolute equilibrium, or the tendency of an energy peak to spread, typify ad hoc constraints imposed implicitly on the initial conditions that are needed to break the time reversal symmetry \cite{article:Tung:2007:1}. In Ref.~\cite{article:Tung:2007:1} we counterproposed  a very simple and mathematically rigorous proof that avoids the need for any ad hoc assumptions  by considering the combined effect of the Navier-Stokes nonlinearity and the dissipation terms. The only assumption used by this proof is that the forcing spectrum is restricted to a finite interval $[k_1, k_2]$ of wavenumbers, however even that assumption can be relaxed to some extent, although not entirely eliminated \cite{thesis:Farazmand:2010,article:Protas:2011}. 

The essence of the argument in Ref.~\cite{article:Tung:2007:1} is to show that for every wavenumber $k$ not in the forcing range, the energy flux $\Pi_E (k)$ and the enstrophy flux $\Pi_G (k)$ satisfy the inequality $k^2\Pi_E (k)- \Pi_G(k) \leq 0$. Here, $\Pi_E (k)$ represents the amount of energy per unit volume transferred from the wavenumbers in the $(0, k)$ interval to the wavenumbers in the $(k,+\infty)$ interval, and $\Pi_G (k)$ is defined similarly for the enstrophy. From this inequality we then derive the following integral constraints for $\Pi_E (k)$ and $\Pi_G (k)$:
\begin{align}
\int_0^k q &\Pi_E (q) \;\df{q} \leq 0,\;\forall k\in (k_2, +\infty),\\
\int_k^{+\infty} & q^{-3} \Pi_G (q) \; \df{q} \geq 0,\;\forall k\in (0,k_1).
\end{align}
These constraints imply a predominantly upscale transfer of energy and a predominantly downscale transfer of enstrophy. The original flux inequality $k^2\Pi_E (k)- \Pi_G(k) < 0$ itself can also be directly interpreted as a tight constraint on the downscale energy flux.

The flux inequality is directly relevant to the cascade superposition hypothesis that was initially proposed in the context of two-dimensional Navier-Stokes turbulence \cite{article:Tung:2005,article:Tung:2005:1}, according to which,  for the case of finite small-scale dissipation viscosity, the downscale enstrophy cascade is accompanied with a hidden downscale energy cascade, associated with an accompanying  small downscale energy flux. We stress that the existence of this small downscale energy flux is not in doubt. The two distinct hypotheses are that: (a) the downscale energy flux is part of a downscale energy cascade that coexists with the downscale enstrophy cascade; (b) given two coexisting cascades of energy and enstrophy the corresponding energy spectra and structure functions will combine linearly. The first hypothesis can be accounted for by the argument given in section 3.2 of Ref.~\cite{article:Tung:2005}, where it is shown, leveraging an old argument by Kraichnan \cite{article:Kraichnan:1967:1}, that triad interactions with scaling exponent $-3$ transfer energy without transferring enstrophy and triad interactions with scaling exponent $-5$ transfer enstrophy without transferring energy. Consequently, there is nothing in the Navier-Stokes nonlinearity to prevent a linear superposition of two sets of triad interactions, one with scaling exponent $-3$ and one with scaling exponent $-5$, which would give rise to coexisting constant fluxes of energy and enstrophy and presumably coexisting cascades. The second hypothesis follows from the linear structure of the statistical theory of randomly forced Navier-Stokes equations; this linearity is lost by most attempts at closure modeling. These hypotheses are controversial because coexisting cascades have not been observed in the two-dimensional turbulence energy spectrum. One the other hand, they have recently been observed in models of stratified turbulence \cite{article:Kurien:2011,article:Lindborg:2011}, and   Tung and Orlando  \cite{article:Orlando:2003} have   provided evidence that they can also be observed in two-layer quasi-geostrophic turbulence, which is just one step away from two-dimensional turbulence.

The key insight to take from Refs. \cite{article:Tung:2005,article:Tung:2005:1} is that the observability of the subdominant downscale energy cascade is not decided by the nonlinearity, the conservation laws, or the relationship between the energy and enstrophy spectra alone. The correct mechanism is that the nonlinearity, combined with the configuration of the dissipation terms, determine the relation between the dissipation rate spectra of energy and enstrophy. This relation determines  whether or not the aforementioned flux inequality is satisfied, which in turn decides the observability of the subdominant downscale energy cascade. If the downscale energy flux associated with the $k^{-5/3}$ term is strong enough, then a scaling transition in the energy spectrum from $k^{-3}$ to $k^{-5/3}$ should occur near a transition wavenumber $k_t \approx \sqrt{\gn_{uv}/\gee_{uv}}$, with $\gn_{uv}$ the downscale enstrophy flux and $\gee_{uv}$ the downscale energy flux. The validity of the flux inequality for all wavenumbers $k$ in the downscale inertial range of two-dimensional Navier-Stokes turbulence implies that the downscale energy flux $\gee_{uv}$ is too weak to cause an observable  scaling transition anywhere within the inertial range. On the other hand, it is far from obvious that the flux inequality will remain unconditionally valid in quasi-geostrophic models. A violation of the flux inequality beyond some wavenumber $k_t$ in quasi-geostrophic models would imply the occurrence of a scaling transition near that wavenumber.

The goal of the present paper is to extend the flux inequality to quasi-geostrophic models. We will specifically focus on vertical discretizations of the quasi-geostrophic model, namely the $n$-layer model, and the special case of the two-layer model, with all layers having the same thickness, in terms of pressure coordinates,  on both models. From a physical standpoint, both models sacrifice the surface quasi-geostrophic dynamics at the bottom boundary, but they are otherwise good models of atmospheric turbulence for scales down to an estimated length scale of $100$km  \cite{article:Lindborg:2007}. I should like to emphasize from the beginning that in spite of any mathematical or phenomenological similarities, extending the flux inequality to quasi-geostrophic models is neither obvious nor straightforward. An overlooked fundamental difference between  two-dimensional Navier-Stokes turbulence and quasi-geostrophic turbulence is that there are many more possible configurations for the dissipation terms in quasi-geostrophic models than there are in two-dimensional Navier-Stokes. Dissipation terms are usually ignored because physical intuition alone may suggest that they  should not have an effect on the nonlinear dynamics in inertial ranges. This line of reasoning ignores that the actual configuration of the dissipation terms can still have unexpected effects on the \emph{magnitude} of the energy and potential enstrophy fluxes passing through the inertial range. These flux  effects are the underlying matter of interest motivating  the investigation initiated by the present paper. 

The original motivation underlying the aforementioned numerical investigation \cite{article:Orlando:2003} of the two-layer quasi-geostrophic model  was to show that it can reproduce the Nastrom-Gage energy spectrum of the atmosphere \cite{article:Gage:1979,article:Nastrom:1986,article:Jasperson:1984,article:Gage:1984}. However, the Nastrom-Gage controversy, reviewed to some extent in previous papers \cite{article:Tung:2006,article:Gkioulekas:p15}, is not the main concern or motivation of this paper.  Our main interest in this problem stems from the following  considerations: first, quasi-geostrophic models are simple enough that they could be accessible to investigation via   theoretical techniques  developed for two-dimensional turbulence \cite{article:Lebedev:1994,article:Lebedev:1994:1,article:Yakhot:1999,article:Procaccia:2002,article:Gkioulekas:2008:1,article:Gkioulekas:p14}. Furthermore, the possibility of being able to study a downscale energy cascade arising in the context of a two-dimensional model is particularly exciting from the point of view of the turbulence theorist, because it ties into the open question of why the downscale energy cascade of three-dimensional turbulence has intermittency corrections but the inverse energy cascade of two-dimensional turbulence does not \cite{article:Vergassola:2000,article:Tabeling:2002}.  Is it an the effect of dimension number or cascade direction? In light of such questions, an observable downscale energy cascade in a two-dimensional system is interesting in and of itself.

Mathematical results concerning the flux inequality in quasi-geostrophic models can be organized into two categories: (a) sufficient conditions for the satisfaction of the flux inequality within the entire inertial range; and (b) sufficient conditions for violating the flux inequality beyond some transition wavenumber $k_t$  within the inertial range. Results of the first type can be proved rigorously without  ad hoc phenomenological assumptions on the behavior of the energy and potential enstrophy spectra. Results of the second type require the introduction of phenomenological assumptions about the distribution of energy and potential enstrophy between layers. Consequently, the scope of this paper has been limited to what we can prove rigorously. More powerful results that can be obtained by introducing phenomenological hypotheses will be explored in future publications. 
Because the details of our argument are very technical, we will now summarize the main argument of the paper as follows.

For the generalized case of an $n$-layer model, we consider the general case of  a \emph{streamfunction dissipation} configuration, where for each   layer the dissipation terms are given by a linear differential operator applied on the streamfunction of the same layer, without entangling any streamfunctions of any other layers. The dissipation rate spectra for both energy and potential enstrophy are derived under this general configuration. Then, we specialize to the case of \emph{symmetric streamfunction dissipation}, where we assume that the corresponding dissipation operators are identical layer-by-layer. We will show that under symmetric streamfunction dissipation the flux inequality is satisfied for all wavenumbers in the inertial and dissipation range.  We note that this result is non-trivial since, beyond establishing cascade directions, it also implies  bounds on the subdominant downscale energy flux, that are tight enough to keep the underlying downscale energy cascade hidden. For the case of the two-layer quasi-geostrophic model, we consider an asymmetric configuration of dissipation terms and establish results of the form that if the asymmetry is sufficiently small, the flux inequality will remain valid. As was previously explained, we limit ourselves to results of this form because this is as far as one can go with rigorous proofs from first principles. 

From a physical standpoint, asymmetry in the dissipation between the two layers usually originates from the Ekman term, modeling the effect of friction with the surface boundary layer. However, for reasons that will be discussed more extensively at the conclusion of this paper, we will introduce an additional source of asymmetry via the small-scale dissipation terms by employing an increased viscosity or hyperviscosity coefficient at the bottom layer relative to the coefficient at the top layer. We believe that this asymmetric small-scale dissipation can facilitate a breakdown of the flux inequality, thereby allowing the downscale energy flux rate to be sufficiently strong to yield the transition to $k^{-5/3}$ scaling in the inertial range. We will see that asymmetric small-scale dissipation indeed tightens the bounds on the parameter space wherein the flux inequality is satisfied. 

Another aspect of the dissipation term configuration, that will be shown to have significant impact on the flux inequality, concerns the modeling of the Ekman term. In a typical formulation of the two-layer quasi-geostrophic model, it is usually assumed that Ekman dissipation is dependent only on the streamfunction field of the bottom potential vorticity layer. However, an alternate formulation of the two-layer quasi-geostrophic model by Salmon \cite{article:Salmon:1980}, requires that the Ekman term at the lower layer be dependent on the streamfunction  fields of both layers. To explain why, one must recall that the  two-layer model is an extreme vertical discretization of the full quasi-geostrophic model, which consists of a relative vorticity equation, a temperature equation, and additional constraining conditions. In a general multi-layer model, the relative vorticity equations are discretized in  horizontal layers that are interlaced with the discretization layers of the temperature equations. Thus, for the case of the two-layer model we have altogether $5$ physically relevant layers: the surface boundary layer corresponding approximately to $1$Atm, the lower relative vorticity layer at $0.75$Atm, the temperature midlayer at $0.5$Atm, the upper relative vorticity layer at $0.25$Atm, and the top boundary layer at $0$Atm. The potential vorticity equations are derived from the relative vorticity equations by eliminating the temperature field from the system of equations, thereby placing the potential vorticity field and the corresponding streamfunction field at the $0.25$Atm and $0.75$Atm layers. As noted by Ref. \cite{article:Salmon:1980}, the Ekman dissipation term is dependent on the streamfunction field at the surface boundary layer near $1$Atm, which can be linearly extrapolated from the streamfunction field at the lower and upper layer ($0.75$Atm and $0.25$Atm correspondingly). Consequently, even though the Ekman term is still placed on the lower-layer, owing to the linear extrapolation of the surface streamfunction field, it is dependent on the streamfunction field of both the lower and upper layers. 

It should be noted that for physical reasons, the potential vorticity layers need to remain fixed at $0.25$Atm and $0.75$Atm respectively. This corresponds to the physical assumption that the two fluid layers have equal thickness, which is a necessary assumption for atmospheric modeling \cite{book:Salmon:1998}. The surface layer driving Ekman dissipation, on the other hand, can be placed anywhere between the surface layer at $1$Atm and the lower streamfunction field layer at $0.75$Atm. When the surface layer and the lower streamfunction layer coincide, this corresponds to the usual standard Ekman term. When the two layers do \emph{not} coincide, it corresponds to the more general case of \emph{extrapolated Ekman dissipation}. For the present paper, we retain generality by parameterizing the placement of the surface boundary layer via an adjustable parameter $\mu$, and show that our main propositions are valid for the entire range of the parameter $\mu$. We will see that an increasing separation between the Ekman surface layer and the bottom potential vorticity layer   tightens the bounds on the parameter space wherein the flux inequality is satisfied. For oceanographic modeling, as well as for the purpose of satisfying basic scientific curiosity, it would be interesting to consider two-layer quasi-geostrophic models with layers having unequal thickness. Due to mathematical complications, we will  not pursue this generalization in the present paper. Nevertheless, the importance of symmetric vs. asymmetric Ekman dissipation in the context of oceanographic modelling is a relevant problem that has been investigated by a previous study \cite{article:Arbic:2007}. 

Admittedly, both Salmon's idea of extrapolated Ekman dissipation and the  idea of differential small-scale dissipation, proposed in this paper,  can be considered controversial.  On the other hand, in the context of investigating the flux inequality, it is important to be thorough about considering every interesting configuration of the dissipation terms, to determine how much impact various choices of dissipation term configurations have on the robustness of the flux inequality. Furthermore, as will become apparent from the  results of this paper, the dissipation configurations explored here  are good candidates for a dissipation filter that could violate the flux inequality and ensure a controlled downscale energy dissipation rate in numerical simulations that exceeds the restrictions that are typical in two-dimensional turbulence. 

The paper is organized as follows. In section 2 we give the governing equations for the generalized multi-layer model and discuss its conservation laws, the definition of the energy spectrum $E(k)$, potential enstrophy spectrum $G(k)$, and their relationship via the streamfunction spectrum $C_{\ab}(k)$. In section 3, after a brief recapitulation of the flux inequality for the simple case of two-dimensional Navier-Stokes turbulence, we establish the flux inequality for a generalized multi-layer quasi-geostrophic model under  symmetric streamfunction dissipation. In section 4, we consider asymmetric dissipation configurations for the special case of a two-layer quasi-geostrophic model, where we derive various sufficient conditions for satisfying the flux inequality. Conclusions and a brief discussion are given in section 5.

\section{The generalized multilayer model and conservation laws}

Following Ref.~\cite{article:Gkioulekas:p15}, we write the governing equations for the generalized multi-layer model in matrix form:
\begin{align}
&\pderiv{q_\ga}{t}+J(\psi_\ga, q_\ga) = d_\ga  + f_\ga, \\
&d_\ga =  \sum_\gb \ccD_{\ga\gb}\psi_\gb, 
\end{align}
with $J(\gy_\ga, q_\ga)$   the Jacobian of $\gy_\ga$ and $q_\ga$ defined as
\begin{equation}
J(\gy_\ga, q_\ga) = \pderiv{\gy_\ga}{x} \pderiv{q_\ga}{y} - \pderiv{\gy_\ga}{y}\pderiv{q_\ga}{x}.
\end{equation}
Here $\gy_\ga$ represents the streamfunction at the $\ga$-layer, $q_\ga$ represents the potential vorticity at the $\ga$-layer, $\ccD_{\ab}$ is a linear operator encapsulating the dissipation terms, and $f_\ga$ is the forcing term acting on the $\ga$-layer. The index $\ga$ takes the values $\ga=1, 2,\ldots, n$ representing the layer number, for a model involving $n$ layers. Sums over indices, such as in the sum over the index $\gb$ in the dissipation terms above, are assumed to run over all layers $1, 2, \ldots, n$, unless we indicate otherwise. It is also assumed that the streamfunction $\gy_\ga$ and the potential vorticity $q_\ga$ are related via a linear operator $\ccL_{\ab}$ according to:
\begin{equation}
q_\ga (\bfx,t) = \sum_{\gb} \ccL_{\ga\gb} \psi_\gb (\bfx,t).
\end{equation}
The above equations encompass both the two-layer quasi-geostrophic model and the multilayer quasi-geostrophic model, on the assumption that we neglect the $\gb$-effect, arising from the latitudinal dependence of the Coriolis pseudoforce. This is a reasonable assumption for Earth, especially if we restrict our interest to a thin strip of the Earth's surface, oriented parallel to the equator. Waite \cite{article:Waite:2013} has shown  that the relationship between the potential vorticity and streamfunction remains approximately linear in models of stratified turbulence with small buoyancy Reynolds number, but becomes quadratic in the limit of large buoyancy Reynolds number. Baroclinic instability is accounted for by the forcing term $f_\ga$, and implicit in the entire argument is the assumption that it forces the system at large scales only.  This assumption, originally proposed by Salmon \cite{article:Salmon:1978,article:Salmon:1980}, is the only physical assumption implicit in the theoretical framework of the flux inequality, and it has been corroborated numerically \cite{article:Tung:1998,article:Orlando:2003}.

For the sake of simplifying our analysis, we assume that all fields are defined in an infinite two-dimensional domain. Then we can write the streamfunction $\gy$ and the potential vorticity $q$ in terms of their Fourier transforms $\hat\psi_\ga$ and $\hat q _\ga$ as follows:
\begin{align}
\psi_\ga (\bfx,t) &= \int_{\bbR^2} \hat\psi_\ga (\bfk,t) \exp (i\bfk\cdot\bfx) \; \df{\bfk}, \\ 
q_\ga (\bfx,t) &= \int_{\bbR^2} \hat q _\ga (\bfk,t) \exp (i\bfk\cdot\bfx) \; \df{\bfk}.
\end{align}
We assume that the operator $\ccL_{\ab}$ is diagonal in Fourier space. This means that the relation between the streamfunction and the potential vorticity, in Fourier space, reads:
\begin{equation}
\hat q _\ga (\bfk,t) = \sum_{\gb} L_{\ga\gb} (\nrm{\bfk}) \hat\psi_\ga (\bfk,t). \label{eq:QpsiRel}
\end{equation}
Here $\nrm{\bfk}$ represents the 2-norm of the vector $\bfk$. We also assume that $\ccL_{\ab}$ is symmetric with  $L_{\ga\gb}(k) = L_{\gb\ga}(k)$ for all wavenumbers $k$. For quasi-geostrophic models, the matrix $L_{\ga\gb}(k)$ is non-singular for all wavenumbers $k>0$, due to being diagonally dominant, and we assume that to be the case in our abstract formulation given above. Consequently, there is an inverse matrix $L_{\ga\gb}^{-1}(k)$ which defines the inverse operator $\ccL_{\ab}^{-1}$. To accommodate a possible singularity at $k=0$ we assume that at wavenumber $k=0$, in Fourier space, the corresponding field component is $0$ for all fields. This is equivalent to subtracting the mean field and considering only the field fluctuation around the mean.


\subsection{Conservation laws}

We will now show that the generalized layer model, in the absence of dissipation, conserves the total energy $E$ and the total potential enstrophy $G$ under very general conditions on the operator $\ccL_{\ab}$, For any arbitrary scalar field $f(x,y)$ we write the corresponding volume integral using the following notation:
\begin{equation}
\snrm{f} = \iint_{\bbR^2} f(x,y) \;\df{x}\df{y}.\label{eq:VolumeAverage}
\end{equation}
We define the total energy $E$ over all layers, and the layer-by-layer total potential enstrophy $G_\ga$ for layer $\ga$, as $E = -\sum_\ga \snrm{\gy_\ga q_\ga}$ and $G_\ga = \snrm{q_\ga^2}$.  The minus sign ensures that $E$ is positive definite when the operator spectrum $L_{ab}(k)$ satisfies the condition given by Eq.~\eqref{eq:PositiveDefiniteOne} and Eq.~\eqref{eq:PositiveDefiniteTwo} consistently with Eqs.~\eqref{eq:qpsiRelationOne}--\eqref{eq:qpsiRelationThree}  and the sign conventions used by Refs.~\cite{article:Salmon:1978,article:Salmon:1980,article:Gkioulekas:p15}. Specifically, we will show that the potential enstrophy is conserved on a layer-by-layer basis unconditionally regardless of the details of the operator $\ccL_{\ab}$. Conservation of the total energy $E$, over all layers, on the other hand, requires that the operator $\ccL_{\ab}$ be \emph{symmetric} and \emph{self-adjoint}.  To define the self-adjoint property, consider two arbitrary two-dimensional scalar fields $f(x,y)$ and $g(x,y)$. We require that every component of the operator $\ccL_{\ab}$ must satisfy $\snrm{f(\ccL_{\ab} g)} = \snrm{(\ccL_{\ab} f) g}$ for any two fields $f(x,y)$ and $g(x,y)$ for all layer numbers $\ga$ and $\gb$. This self-adjoint property, so defined, follows as an immediate consequence of our previous assumption that the operator $\ccL_{\ab}$ is diagonal in Fourier space. In the proof given below, however, there is no need to use the stronger assumption of diagonality.

The proof is based on the following properties of the nonlinear Jacobian term. If $a(x,y)$ and $b(x,y)$ are two-dimensional smooth scalar-fields that vanish at infinity, then we can show that $\snrm{J(a,b)} =0$, using integration by parts. This result also holds for  the case of fields defined in a finite box with periodic boundary conditions, if the volume integral in Eq.~\eqref{eq:VolumeAverage} is restricted over the box.  Then, we note that, as an immediate consequence of the product rule of differentiation, given three two-dimensional scalar fields $a(x,y), b(x,y)$, and $c(x,y)$ we have
\begin{equation}
\snrm{J(ab,c)} = \snrm{aJ(b,c)} + \snrm{bJ(a,c)} = 0,
\end{equation}
from which we obtain the identity
\begin{equation}
\snrm{aJ(b,c)} = \snrm{bJ(c,a)} = \snrm{cJ(a,b)}.
\label{eq:JacobianCycleIdentity}
\end{equation}
Now, let us go ahead and drop the dissipation and forcing terms and write the time-derivative of the potential vorticity $q_\ga$ as $\dot q_\ga = -J(\gy_\ga, q_\ga)$. Then, the time derivative of the streamfunction $\gy_\ga$ reads:
\begin{equation}
\pderivin{\gy_\ga}{t} = \sum_\gb \ccL_{\ab}^{-1} (\pderivin{q_\gb}{t}) = -\sum_\gb \ccL_{\ab}^{-1} J(\gy_\gb, q_\gb).
\end{equation}

Differentiating the total potential enstrophy $G_\ga$ for the $\ga$ layer with respect to time and employing the identity given by Eq.~\eqref{eq:JacobianCycleIdentity} immediately gives:
\begin{align}
\dderivin{G_{\ga}}{t} &= 2\snrm{q_\ga (\pderivin{q_\ga}{t})} = -2\snrm{q_\ga J(\gy_\ga,q_\ga)} = -2\snrm{\gy_\ga J(q_\ga, q_\ga)} = 0.
\end{align}
Here, we note that from the definition of the Jacobian $J(q_\ga, q_\ga)= 0$. This establishes the layer-by-layer conservation law of potential enstrophy, unconditionally, as claimed. To show the energy conservation law, we differentiate the total energy $E$ with respect to time and obtain:
\begin{align}
\dderivin{E}{t} &= -(\dderivin{}{t})\sum_\ga \snrm{\gy_\ga q_\ga} = -\sum_\ga\snrm{(\pderivin{\gy_\ga}{t}) q_\ga}-\sum_\ga \snrm{\gy_\ga (\pderivin{q_\ga}{t})}\\ 
&= \sum_{\ab} \snrm{q_\ga \ccL_{\ab}^{-1} J(\gy_\gb, q_\gb)}+\sum_\ga\snrm{\gy_\ga J(\gy_\ga, q_\ga)} \\ 
&= \sum_{\ab} \snrm{J(\gy_\gb,q_\gb)\ccL_{\ab}^{-1} q_\ga}+\sum_\ga \snrm{q_\ga J(\gy_\ga, \gy_\ga)}\label{eq:EnergyConservationUseSelfAdjoint} \\ 
&= \sum_{\ab}\snrm{J(\gy_\gb, q_\gb)\ccL_{\ba}^{-1} q_\ga} = \sum_{\gb}\snrm{J(\gy_\gb, q_\gb) \gy_\gb}\label{eq:EnergyConservationUseSymmetric} \\ 
&= \sum_{\gb}\snrm{J(\gy_\gb, \gy_\gb) q_\gb} =0.
\end{align}
 Note that the self-adjoint property is applied at   Eq.~\eqref{eq:EnergyConservationUseSelfAdjoint}, and the symmetric  property is applied at  Eq.~\eqref{eq:EnergyConservationUseSymmetric}. This concludes the  proof.

\subsection{Definition of spectra}

We define spectra for the energy and potential enstrophy using the bracket notation introduced in Ref.~\cite{article:Gkioulekas:p15}. Consider, in general, an arbitrary two-dimensional scalar field $a(x)$. Let $a^{<k}(\bfx)$ be the field obtained from $a(\bfx)$ by setting to zero, in Fourier space, the components corresponding to wavenumbers greater than $k$. Formally, $a^{<k}(\bfx)$ is defined as
\begin{align}
a^{<k}(\bfx) = \int_{\bbR^2} \df{\bfx_0} \int_{\bbR^2} \df{\bfk_0}\;\frac{H(k-\nrm{\bfk_0})}{4\pi^2} \exp (i\bfk_0\cdot (\bfx-\bfx_0))a(\bfx_0),
\end{align}
with $H(x)$ the Heaviside function, defined as the integral of a delta function:
\begin{align}
H(x) &= \casethree{1}{\text{if } x\in(0,+\infty)}{1/2}{\text{if } x=0}{0}{\text{if } x\in (-\infty,0)}.
\end{align}
We now use two filtered fields $a^{<k}(\bfx)$ and $b^{<k}(\bfx)$ to define the bracket $\innerf{a}{b}{k}$ as:
\begin{align}
\innerf{a}{b}{k} &= \dD{k} \int_{\bbR^2} \df{\bfx}\;\avg{a^{<k}(\bfx)b^{<k}(\bfx)} \label{eq:BracketDef} \\ 
&= \frac{1}{2} \int_{A\in\SO{2}} \df{\Omega (A)}\; \avg{[\hat a^{\ast}(kA\bfe) \hat b(kA\bfe) +\hat a(kA\bfe) \hat b^{\ast}(kA\bfe)]}. \label{eq:BracketFourier}
\end{align}
Here, $\hat a (\bfk)$ and $\hat b (\bfk)$ are the Fourier transforms of $a(\bfx)$ and $b(\bfx)$, $\SO{2}$ is the set of all non-reflecting rotation matrices in two dimensions, $d\Omega (A)$ is the measure of a spherical integral, $\bfe$ is a two-dimensional unit vector, and $\avg{\cdot}$ represents  an ensemble average. The star superscript denotes a complex conjugate. Note that Eq.~\eqref{eq:BracketDef} is the definition of the bracket, and  Eq.~\eqref{eq:BracketFourier} follows from Eq.~\eqref{eq:BracketDef} as a consequence. The bracket satisfies the following properties:
\begin{align}
&\innerf{a}{b}{k} = \innerf{b}{a}{k},\\ 
&\innerf{a}{b+c}{k} =\innerf{a}{b}{k} + \innerf{a}{c}{k}, \\ 
&\innerf{a+b}{c}{k} =\innerf{a}{c}{k} + \innerf{b}{c}{k}.
\end{align}
Moreover, every $(\ab)$-component of the operator $\ccL_{\ga\gb}$ is self-adjoint with respect to the bracket:
\begin{equation}
\innerf{\ccL_{\ab} a}{b}{k} = \innerf{a}{\ccL_{\ab} b}{k} = L_{\ab}(k)\innerf{a}{b}{k},
\end{equation}
and the same property is also satisfied by every component of the inverse operator $\ccL_{\ga\gb}^{-1}$:
\begin{equation}
\innerf{\ccL_{\ab}^{-1} a}{b}{k} = \innerf{a}{\ccL_{\ab}^{-1} b}{k} = L_{\ab}^{-1} (k)\innerf{a}{b}{k}.
\end{equation}

Using the bracket, we define the energy spectrum $E(k) = -\sum_\ga \innerf{\gy_\ga}{q_\ga}{k}$, and we also define the layer-by-layer potential enstrophy spectrum $G_{\ga}(k) = \innerf{q_\ga}{q_\ga}{k}$ and the total potential enstrophy spectrum  $G(k) = \sum_\ga G_{\ga}(k)$. Unlike the case of two-dimensional Navier-Stokes, where the enstrophy and energy spectra $G(k)$ and $E(k)$ are related via a simple equation, $G(k)=k^2 E(k)$, in the generalized layer model, the potential enstrophy spectrum and the energy spectrum are related indirectly, as shown below:

 Define the streamfunction spectrum $C_{\ab}(k)= \innerf{\gy_\ga}{\gy_\gb}{k}$. Then, via the properties of the bracket above,  the energy spectrum $E(k)$ reads
\begin{align}
E(k) &= -\sum_{\ga} \innerf{\psi_\ga}{q_\ga}{k} = -\sum_{\ga}\innerf{\psi_\ga}{\sum_{\gb} \ccL_{\ab} \psi_\gb}{k} = -\sum_{\ab}L_{\ab}(k) \innerf{\psi_\ga}{\psi_\gb}{k} \\ 
&= -\sum_{\ab} L_{\ab}(k)C_{\ab}(k), \label{eq:Ek}
\end{align}
and the potential enstrophy spectrum $G_{\ga}(k)$ reads
\begin{align}
G (k) &= \sum_{\ga} \innerf{q_\ga}{q_\ga}{k} = \sum_{\ga} \innerf{\sum_{\gb} \ccL_{\ab} \psi_\gb}{\sum_{\gc} \ccL_{\ac} \psi_\gc}{k} \\ 
&= \sum_{\ab} L_{\ab} (k)\innerf{\psi_\gb}{\sum_{\gc} \ccL_{\ac} \psi_\gc}{k} = \sum_{\abc} L_{\ab} (k) L_{\ac}(k) \innerf{\psi_\gb}{\psi_\gc}{k}\\ 
&= \sum_{\abc} L_{\ab} (k) L_{\ac}(k) C_{\bc}(k). \label{eq:Gk}
\end{align}
Thus, they are related only indirectly via the streamfunction spectrum $C_{\ab}(k)$.

We note that for $\ga\neq\gb$, $C_{\ab}(k)$ may take positive or negative values. For the case $\ga = \gb$ we define $U_{\ga}(k)= \innerf{\gy_\ga}{\gy_\ga}{k}$, which is always positive (i.e., $U_{\ga}(k)\geq 0$), and $U(k) = \sum_{\ga} U_{\ga} (k)$. Then we note that since $U_\ga (k) + U_\gb (k) \pm 2C_{\ab} (k) = \innerf{\gy_\ga \pm \gy_\gb}{\gy_\ga \pm \gy_\gb}{k} \geq 0$, we get the arithmetic-geometric mean inequality $2|C_{\ab}(k)| \leq U_\ga (k) + U_\gb (k)$. We can use this inequality to show that if the matrix $L_{\ab}(k)$ satisfies the diagonal dominance condition
\begin{align}
& L_{\ab}(k)\geq 0, \text{ for } \ga\neq\gb,  \\ 
& \sum_{\gb} L_{\ab}(k) \leq 0,
\end{align}
then the energy spectrum $E(k)$ is  always positive. We give the proof in \ref{app:EkPositiveDefinite}. Both the two-layer quasi-geostrophic model and the multi-layer quasi-geostrophic model  satisfy this diagonal dominance condition. As for the layer-by-layer potential enstrophy spectra $G_{\ga}(k)$, it is immediately obvious that they are unconditionally always positive, regardless of the form of the matrix $L_{\ab}(k)$, since by definition $G_{\ga}(k) = \innerf{q_\ga}{q_\ga}{k}$.

\section{Flux inequality for the $n$-layer model}

We now turn to the main issue of identifying sufficient conditions for satisfying the flux inequality $k^2 \Pi_E (k) - \Pi_G (k) \leq 0$ for quasi-geostrophic models. Let us recall that the energy flux spectrum $\Pi_E (k)$ is defined as the amount of energy transferred from the $(0, k)$ interval to the $(k, +\infty)$ interval per unit time and per unit volume.  Likewise, the potential enstrophy flux spectrum $\Pi_G (k)$ is the amount of potential enstrophy transferred from the $(0,k)$ interval to the $(k, +\infty)$ interval, again per unit time and volume. Assuming a forced-dissipative configuration at steady state and that there is no forcing in the $(k, +\infty)$ wavenumber interval, the energy and potential enstrophy transferred into the $(k, +\infty)$ interval eventually are dissipated somewhere in that interval. It follows that we may write the flux spectra $\Pi_E (k)$ and $\Pi_G (k)$ as integrals of the energy and potential enstrophy dissipation rate spectra $D_E(k)$ and $D_G(k)$:
\begin{align}
\Pi_E (k) &= \int_k^{+\infty} D_E (q) \df{q}, \label{eq:EnergyFluxIntegral} \\ 
\Pi_G (k) &= \int_k^{+\infty} D_G (q) \df{q}, \label{eq:PotentialEnstrophyFluxIntegral}
\end{align}
which implies that
\begin{equation}
k^2 \Pi_E (k) - \Pi_G (k) = \int_k^{+\infty} [k^2 D_E (q) - D_G (q)] \df{q} = \int_k^{+\infty} \gD (k,q) \df{q}. \label{eq:DefinitionOfDelta}
\end{equation}
where $\gD(k,q)$ will be used as an abbreviation for $\gD (k,q)=k^2 D_E (q) - D_G (q)$.  We see that a sufficient condition for establishing the flux inequality is to show that $\gD(k,q) \leq 0$ for all wavenumbers $k < q$. It is also easy to see that $\gD(k,q) > 0$ for all wavenumbers $k_t<k<q$ is sufficient for establishing the violation of the flux inequality for all wavenumbers $k>k_t$.

For the case of two-dimensional Navier-Stokes turbulence, the dissipation rate spectra $D_E(k)$ and $D_G(k)$ are related via $D_G(k) = k^2 D_E(k)$. This immediately gives $\gD (k,q) = k^2 D_E (q) - D_G (q) = (k^2-q^2) D_E (q)\leq 0$ for all wavenumbers $k<q$ (since $D_E(k)\geq 0$), which in turn gives the flux inequality $k^2 \Pi_E (k) - \Pi_G (k) \leq 0$. The physical interpretation of this inequality is that when we stretch the separation of scales in the downscale range, the energy dissipation rate at small-scales vanishes rapidly. As a result, most of the injected energy cannot cascade downscale although, as noted previously \cite{article:Tung:2005,article:Tung:2005:1}, a small amount of energy is able to do so. As we have seen in the previous section, for the case of quasi-geostrophic models, the energy and potential enstrophy dissipation rate spectra no longer have a direct and simple relation with each other, so the validity of the flux inequality needs to be carefully re-examined.

For the general multi-layer quasi-geostrophic model, the relationship between the potential vorticities $q_\ga$ and the streamfunctions $\gy_\ga$ is given by 
\begin{align}
q_1 &= \del^2 \gy_1 + \mu_1 k_R^2 (\gy_2-\gy_1), \label{eq:qpsiRelationOne} \\
q_\ga &= \lapl\gy_\ga - \gl_\ga k_R^2 (\gy_\ga-\gy_{\ga-1})+\mu_\ga k_R^2 (\gy_{\ga+1}-\gy_\ga), \text{ for } 1<\ga<n, \label{eq:qpsiRelationTwo}\\
q_n &= \lapl\gy_n-\gl_n k_R^2 (\gy_n-\gy_{n-1}). \label{eq:qpsiRelationThree}
\end{align}
Here, $k_R$ is the Rossby wavenumber and $\gl_\ga$ and $\mu_\ga$ are the non-dimensional Froude numbers, given by
\begin{align*}
\gl_\ga &= \frac{1}{2}\frac{h_1}{h_\ga}\frac{\rho_2-\rho_1}{\rho_\ga-\rho_{\ga-1}}, \text{ for } 1<\ga \leq n, \\
\mu_\ga &= \frac{1}{2}\frac{h_1}{h_\ga}\frac{\rho_2-\rho_1}{\rho_{\ga+1}-\rho_\ga}, \text{ for } 1\leq\ga < n,
\end{align*}
with $\gr_\ga$ the average density of layer $\ga$, and $h_\ga$ the average height of layer $\ga$ (in pressure coordinates). The definition of the non-dimensional Froude numbers was adjusted with a $1/2$ numerical factor, from the one given by Evensen \cite{article:Evensen:1994}, to ensure agreement with the formulation of the two-layer quasi-geostrophic model given by Salmon \cite{article:Salmon:1980} for the case $n=2$. The components of the corresponding matrix $L_{\ab}(k)$ are given by 
\begin{align*}
L_{\ga\ga}(k) &= \casethree{-k^2-\mu_1 k_R^2}{\ga=1}{-k^2-(\gl_\ga+\mu_\ga) k_R^2}{1<\ga <n}{-k^2-\gl_n k_R^2}{\ga=n,} \\
L_{\ga, \ga+1}(k) &= \mu_\ga k_R^2, \text{ for } 1\leq\ga <n, \\
L_{\ga, \ga-1}(k) &= \gl_\ga k_R^2, \text{ for } 1< \ga \leq n.
\end{align*}
In the present paper we limit ourselves to the special case of a symmetrically coupled multi-layer quasi-geostrophic model, where we assume that the layer thickness $h_\ga$ is the same for all layers, thereby yielding a symmetric matrix $L_{\ab}(k)$ such that $L_{\ga, \ga+1}(k) = L_{\ga+1, \ga}(k)$ for all $1\leq\ga < n$. 

To consider the flux inequality for this general $n$-layer model, we begin with writing the dissipation rates $D_E(k)$ and $D_G(k)$ for the energy and potential enstrophy  in terms of the streamfunction spectrum $C_{\ab}(k)$. We assume that the dissipation operation $\ccD_{\ga\gb}$ is  diagonal in Fourier space and that the Fourier transform of the dissipation term $\ccD_{\ga\gb} \psi_\gb$ reads:
\begin{equation}
(\ccD_{\ab} \gy_\gb) (\bfx, t) = \int_{\bbR^2} D_{\ab} (\nrm{\bfk}) \hat \gy_\gb (\bfk,t) \exp (i\bfk\cdot\bfx) \;\df{\bfk}.
\end{equation}
Then, in  \ref{app:DissipationRateSpectra} we show that  the energy dissipation rate spectrum $D_E (k)$ and the layer-by-layer potential enstrophy dissipation rate spectra $D_{G_\ga} (k)$ are given by
\begin{align}
D_E (k) &= 2 \sum_{\ab} D_{\ab} (k) C_{\ab} (k), \label{eq:DissEk}   \\ 
D_{G_\ga} (k) &= -2 \sum_{\bc} L_{\ab}(k) D_{\ac} (k) C_{\bc} (k). \label{eq:DissGk}
\end{align}
Note that in order for the dissipation terms to be truly dissipative, the dissipation spectra $D_E (k)$ and $D_G(k)$ need to be both always positive for all wavenumbers $k$.  From the general form of the above equations this is not readily obvious. However, for simpler configurations of the dissipation operators, the above expressions for $D_E (k)$ and $D_G(k)$ simplify considerably, thereby making it possible to establish that they are both always positive. These expressions also underscore the main difference between two-dimensional Navier-Stokes turbulence and quasi-geostrophic turbulence and the reason why the flux inequality becomes a non-trivial problem in the latter case. Unlike two-dimensional turbulence, and in spite of the twin conservation laws of energy and potential enstrophy, the dissipation rates $D_E(k)$ and $D_G(k)$ are no longer related by any simple relation of the form  $D_G(k) = k^2D_E(k)$.

We restrict our attention to the case where  the dissipation operators at every layer involve only the streamfunction of the corresponding layer, with no explicit interlayer terms. This can be arranged in terms of a linear operator $\ccD_{\ga}$ applied to  the streamfunction $\gy_\ga$. If $D_\ga (k)$ is the spectrum of the positive-definite operator $\ccD_{\ga}$, then for the case of a dissipation term $d_\ga = \ccD_{\ga} \gy_\ga$, we have $D_{\ab}(k)=\gd_{\ab} D_\gb (k)$, with $\gd_{\ab}$ given by
\begin{equation}
\gd_{\ab} = \casetwo{1}{\ga = \gb}{0}{\ga\neq\gb}.
\end{equation}  
We designate this case as \emph{streamfunction-dissipation}. The $D_E (k)$ and $D_{G_\ga}(k)$ simplify as:
\begin{align}
D_E (k) &= 2 \sum_{\ab} D_{\ab} (k) C_{\ab} (k) = 2 \sum_{\ab} \gd_{\ab} D_\gb (k) C_{\ab} (k) = 2 \sum_{\ga} D_\ga (k) C_{\ga\ga} (k) = 2 \sum_{\ga} D_\ga (k) U_{\ga} (k), \label{eq:EkStrDiss} \\ 
D_{G_\ga}(k) &= -2 \sum_{\bc} L_{\ab}(k) D_{\ac} (k) C_{\bc} (k) = -2 \sum_{\bc} L_{\ab}(k) \gd_{\ac} D_\gc (k) C_{\bc} (k) = -2 \sum_{\gb} L_{\ab}(k) D_\gb (k) C_{\ab} (k). \label{eq:GkStrDiss}
\end{align}
Note that for $D_{\ga}(k)\geq 0$, it follows that $D_E (k)\geq 0$, but it is not obvious that the same result extends to $D_{G_\ga}(k)$. However,  if we further assume that the same operator is used for all layers, i.e. $D_\ga (k) = D(k)$, then we have the more specialized case of \emph{symmetric streamfunction-dissipation}, and the dissipation rate spectra $D_E (k)$ and $D_{G}(k)$ can be simplified further to give:
\begin{align}
D_E (k) &= 2 \sum_{\ga} D_\ga (k) U_{\ga} (k) = 2 D(k) \sum_{\ga} U_{\ga} (k) = 2 D(k) U(k), \\ 
D_{G}(k) &= \sum_{\ga} D_{G_\ga}(k) = -2 \sum_{\ab} L_{\ab}(k) D_\gb (k) C_{\ab} (k) = 2 D(k) \left[-\sum_{\ab} L_{\ab}(k) C_{\ab} (k) \right] = 2 D(k) E(k).
\end{align}
Now, $D(k)\geq 0$ implies both $D_E (k) \geq 0$ and $D_G (k) \geq 0$. 

It follows that, under symmetric streamfunction dissipation, $\gD(k,q)$ is  given by
\begin{equation}
\gD (k,q) = k^2 D_E (q) - D_G (q) = k^2 D(q) U(q) - D(q) E(q) = D(q) [k^2 U(q) - E(q)],
\end{equation}
and since $D(q)\geq 0$, the validity of the flux inequality is dependent on the sign of the factor $k^2 U(q) - E(q)$. That sign is in turn intimately related with the expression $\gc_\ga (k,q)$ defined as:
\begin{equation}
\gc_\ga (k,q) = k^2+\sum_\gb L_{\ab} (q).
\end{equation}
Note that for the case of two-dimensional Navier-Stokes, $L(q)$ becomes a $1\times 1$ matrix with $L_{11}(q) = q^2$, thus $\gc_\ga (k,q) = k^2-q^2$, which is negative when $k<q$. For more generalized $n$-layer quasi-geostrophic models, the expression $\gc_\ga(k,q)$  continues to be given by $\gc_\ga (k,q) = k^2-q^2$ which remains negative when $k<q$ for all layers $\ga$. We will now show that:
\begin{proposition}
In a generalized $n$-layer model, under symmetric streamfunction dissipation $d_\ga = +\ccD \gy_\ga$ with spectrum $D(k)\geq 0$, we assume that $L_{\ab}(q)\geq 0$ when $\ga\neq\gb$, and $L_{\ab}(q)=L_{\ba}(q)$, and $\gc_\ga(k,q)\leq 0$ when $k<q$ for all $\ga$. It follows that:
\begin{equation*}
\gD (k,q) \leq D(q) \sum_\ga \gc_\ga (k,q) U_\ga (q) \leq 0.
\end{equation*}
\end{proposition}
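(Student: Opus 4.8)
The plan is to reduce everything to the factorization already in hand: under symmetric streamfunction dissipation we have shown $\gD(k,q) = D(q)[k^2 U(q) - E(q)]$, and since $D(q) \geq 0$ the entire chain follows once we establish the two scalar inequalities
\[
k^2 U(q) - E(q) \leq \sum_\ga \gc_\ga(k,q) U_\ga(q) \leq 0 .
\]
First I would expand the middle quantity using $\gc_\ga(k,q) = k^2 + \sum_\gb L_{\ab}(q)$ together with $U(q) = \sum_\ga U_\ga(q)$, obtaining $\sum_\ga \gc_\ga(k,q) U_\ga(q) = k^2 U(q) + \sum_{\ab} L_{\ab}(q) U_\ga(q)$. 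After cancelling the common term $k^2 U(q)$ and inserting $E(q) = -\sum_{\ab} L_{\ab}(q) C_{\ab}(q)$ from Eq.~\eqref{eq:Ek}, the left inequality collapses to the single claim $\sum_{\ab} L_{\ab}(q) C_{\ab}(q) \leq \sum_{\ab} L_{\ab}(q) U_\ga(q)$.

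The key step, which I expect to carry the argument, is to prove this remaining claim by symmetrizing and invoking the arithmetic-geometric mean bound $2|C_{\ab}(q)| \leq U_\ga(q) + U_\gb(q)$ established earlier. Since $L_{\ab}(q)$ is symmetric, relabelling the summation indices gives $\sum_{\ab} L_{\ab}(q) U_\ga(q) = \frac{1}{2} \sum_{\ab} L_{\ab}(q)[U_\ga(q) + U_\gb(q)]$, so the claim is equivalent to
\[
\sum_{\ab} L_{\ab}(q)\left[ C_{\ab}(q) - \frac{1}{2}\bigl(U_\ga(q) + U_\gb(q)\bigr)\right] \leq 0 .
\]
Here the diagonal terms ($\ga = \gb$) vanish identically because $C_{\ga\ga}(q) = U_\ga(q)$, while each off-diagonal summand is a product of $L_{\ab}(q) \geq 0$ (hypothesis) with the factor $C_{\ab}(q) - \frac{1}{2}(U_\ga(q)+U_\gb(q))$, which is nonpositive by the arithmetic-geometric mean inequality. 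The sum is therefore term by term nonpositive, which settles the left inequality.

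The right inequality is immediate: for $k < q$ the hypothesis $\gc_\ga(k,q) \leq 0$ holds for every $\ga$, and each $U_\ga(q) \geq 0$ by the positive-definiteness of the streamfunction spectrum, so $\sum_\ga \gc_\ga(k,q) U_\ga(q) \leq 0$. Multiplying both inequalities through by $D(q) \geq 0$ then reproduces $\gD(k,q) \leq D(q)\sum_\ga \gc_\ga(k,q) U_\ga(q) \leq 0$, as claimed. The only genuine subtlety is to handle the symmetrization carefully, so that the diagonal contributions cancel exactly rather than being double counted; once that bookkeeping is fixed, the remainder is direct substitution.
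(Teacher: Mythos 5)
Your proposal is correct and follows essentially the same route as the paper's proof: both reduce the claim to $k^2U(q)-E(q)\leq\sum_\ga\gc_\ga(k,q)U_\ga(q)$, symmetrize the sum using $L_{\ab}(q)=L_{\ba}(q)$, and dispose of the off-diagonal terms via $L_{\ab}(q)\geq 0$ together with the arithmetic--geometric mean bound $2C_{\ab}(q)\leq U_\ga(q)+U_\gb(q)$, the only cosmetic difference being that the paper imports the symmetrized decomposition of $E(q)$ from its appendix while you re-derive it inline and note explicitly that the diagonal terms cancel because $C_{\ga\ga}(q)=U_\ga(q)$.
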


\begin{proof}
We begin by recalling from  \ref{app:EkPositiveDefinite}, that $E(q)$ can be rewritten as 
\begin{equation}
E(q) =  -\sum_{\ab} L_{\ab}(q)  U_\ga (q)  - \frac{1}{2}\sum_{\substack{\ab \\ \ga\neq\gb}} L_{\ab}(q) [2C_{\ab}(q)-U_{\ga}(q)-U_{\gb}(q)].
\end{equation}
It follows that $k^2 U(q) - E(q)$ satisfies:
\begin{align}
k^2 U(q) - E(q) &= k^2 \sum_{\ga} U_\ga (q) + \sum_{\ab} L_{\ab}(q) U_\ga (q) + \frac{1}{2}\sum_{\substack{\ab \\ \ga\neq\gb}} L_{\ab}(q) [2C_{\ab}(q)-U_{\ga}(q)-U_{\gb}(q)] \\
&\leq k^2 \sum_{\ga} U_\ga (q) + \sum_{\ab} L_{\ab}(q) U_\ga (q) = \sum_{\ga} \biggr( k^2 + \sum_{\gb} L_{\ab}(q) \biggl) U_\ga (q) \\
&= \sum_\ga \gc_\ga (k,q) U_\ga (q).
\end{align}
The inequality uses the assumption $L_{\ab}(q)\geq 0$ combined with the arithmetic-geometric mean  inequality $2C_{\ab}(q) \leq U_\ga (q) + U_\gb (q)$ of the streamfunction spectra. It follows that
\begin{equation}
\gD (k,q) = D(q) [k^2 U(q) - E(q)] \leq D(q) \sum_\ga \gc_\ga (k,q) U_\ga (q) \leq 0,
\end{equation}
since $D(q)\geq 0$, $U_\ga (q)\geq 0$, and $\gc_\ga (k,q)\leq 0$, thereby concluding the proof. 
\end{proof}

The above result establishes the unconditional validity of the flux inequality for generalized $n$-layer quasi-geostrophic models under   symmetric streamfunction dissipation. We note that the condition $L_{\ab}(q)\geq 0$ is needed to establish that the energy spectrum $E(k)$ is always positive, and all physically relevant quasi-geostrophic models will also satisfy the condition $\gc_\ga(k,q)\leq 0$ for all $k<q$.  As we have already argued, for any general $n$-layer quasi-geostrophic model, we have $\gc_\ga(k,q)=k^2-q^2$ for all layers $\ga$, so the assumption is mathematical and does not impose any physical constraints in the model's formulation. No other restrictions are needed by the above proposition. Physically, this means that under symmetric streamfunction dissipation, the behavior of any generalized $n$-layer model will be similar to two-dimensional turbulence, where the subdominant downscale energy cascade is too weak to cause a transition from $k^{-3}$ scaling to $k^{-5/3}$ scaling in the downscale inertial range.

\section{Flux inequality in a two-layer model}

The previous results, derived over a general $n$-layer quasi-geostrophic model also apply to the special case of a two-layer quasi-geostrophic model. Consequently, the flux inequality will be satisfied by any two-layer quasi-geostrophic models under symmetric streamfunction dissipation. We will now concentrate on investigating the validity of the flux inequality in two-layer quasi-geostrophic models with asymmetric dissipation.

Since the details of the argument  below are very technical, we provide a brief outline. In section 4.1 we write the governing equations for the two-layer quasi-geostrophic model and define the two novel features of the proposed configuration of the dissipation terms: \emph{extrapolated Ekman damping}, controlled by the parameter $\mu$, and \emph{small-scale differential dissipation}, which is controlled by the parameter $\gD\nu$. In section 4.2 we derive the general form of the energy dissipation spectrum $D_E (k)$ and the potential enstrophy dissipation spectrum $D_G (k)$ for the most general dissipation term configuration. In section 4.3 we derive Proposition~\ref{prop:GeneralSuffCondition}, which gives a sufficient condition, via Eq.~\eqref{eq:GeneralSuffCondition}, for satisfying the flux inequality, in terms of the dissipation term configuration, which is completely described by the spectra $D_1 (k)$, $D_2 (k)$, $d(k)$, and the parameter $\mu$. The proposition is very abstract and general, as it accounts for a very wide range of possible  configurations. In section 4.4 we derive, from proposition~\ref{prop:GeneralSuffCondition}, a series of corollaries for four special cases of interest: (a) the case of streamfunction dissipation with both extrapolated Ekman damping and differential small-scale dissipation,  given by Eq.~\eqref{eq:SuffCondOne}; (b) the case of streamfunction dissipation with differential  small-scale dissipation but without extrapolated Ekman dissipation, given by Eq.~\eqref{eq:SuffCondTwo}; (c) the case of streamfunction dissipation with extrapolated Ekman dissipation but without differential small-scale dissipation,  given by Eq.~\eqref{eq:SuffCondThree}; (d) the case of the standard symmetric streamfunction dissipation without any special features, given by Eq.~\eqref{eq:OrigSuffCond}. A careful comparison is given between the sufficient conditions to satisfy the flux inequality for each of the four cases. Finally, section 4.5 gives a different set of sufficient conditions to satisfy the flux inequalities in terms of the streamfunction spectra. Future work should combine these conditions with some phenomenological model of the energetics of the two-layer model to extract useful information. 

\subsection{Model formulation}

The two-layer quasi-geostrophic model can be formulated in terms of two potential vorticity equations of the form 
\begin{align}
\pderiv{q_1}{t} &+ J(\gy_1, q_1) = f_1 + d_1, \label{eq:PotVortOne}  \\
\pderiv{q_2}{t} &+ J(\gy_2, q_2) = f_2 + d_2, \label{eq:PotVortTwo}
\end{align}
with the relationship between the potential vorticities $q_1$, $q_2$ and the streamfunctions $\gy_1$, $\gy_2$ given by 
\begin{align}
q_1 &=  \del^2 \gy_1  + \frac{k_R^2}{2}(\gy_2-\gy_1), \label{eq:DefPotVortOne} \\
q_2 &=  \del^2 \gy_2  - \frac{k_R^2}{2}(\gy_2-\gy_1). \label{eq:DefPotVortTwo} 
\end{align}
Here $q_1$, $\gy_1$ correspond to the top layer and $q_2$, $\gy_2$ correspond to the bottom layer. As explained in the introduction, we situate the top layer at  $p_1=0.25$Atm and the bottom layer at $p_2=0.75$Atm. In terms of the generalized layer model, Eq.~\eqref{eq:DefPotVortOne} and Eq.~\eqref{eq:DefPotVortTwo} correspond to an operator $\cL_{\ab}$ with spectrum $L_{\ab}(k)$ given by
\begin{equation}
L(k) = -\mattwo{a(k)}{b(k)}{b(k)}{a(k)},
\end{equation}
with $a(k)$ and $b(k)$ given by $a(k) = k^2+k_R^2/2$ and $b(k) = -k_R^2$. Using differential hyperdiffusion at the small scales and extrapolated Ekman dissipation at the bottom layer gives
\begin{align}
d_1 &= \nu (-1)^{p+1} \del^{2p+2} \gy_1, \label{eq:DissOne} \\
d_2 &= (\nu + \gD\nu) (-1)^{p+1} \del^{2p+2}\gy_2 - \nu_E \del^2 \gy_s. \label{eq:DissTwo}
\end{align}
Here we assume that the hyperdiffusion is stronger at the lower layer, with $\gD\nu >0$ being the additional hyperdiffusion coefficient added to the lower-layer (the reader should not confuse the coefficient $\gD\nu$ with the previously defined function $\gD (k,q)$). Furthermore, the Ekman term is given in terms of the streamfunction $\gy_s$ at the Ekman surface layer  which is linearly extrapolated from $\gy_1$ and $\gy_2$ and it is given by $\gy_s = \gl\gy_2+\mu\gl\gy_1$, with $\gl$ and $\mu$ given by
\begin{equation}
\gl = \frac{p_s-p_1}{p_2-p_1} \text{ and } \mu = \frac{p_2-p_s}{p_s-p_1}.
\end{equation}
In other words, $\gy_s$ is defined so that, plotted on a pressure-streamfunction plane, the three points $(p_s,\gy_s), (p_1, \gy_1), (p_2, \gy_2)$ are  colinear. Using $p_1=0.25$Atm, $p_2=0.75$Atm and $p_s = 1$Atm gives $\gl=3/2$ and $\mu=-1/3$. It is worth noting that for any arbitrary placement of the top and bottom layer that satisfies $0<p_1<p_2<p_s$, we can show that $-1<\mu<0$. This constraint on $\mu$  is all that is needed to derive the main results of this paper, so the precise placement of the surface layer is not important for our argument below. On the other hand, moving the potential vorticity layers around would necessitate non-symmetric generalizations of the operator $\ccL_{\ab}$, which may be interesting for oceanographic modeling, but not necessary for atmospheric modeling, and beyond the scope of this paper. We will therefore assume that $p_1$ and $p_2$ are fixed but allow $p_s$ to vary as $p_2 \leq p_s \leq1$Atm, which in turn corresponds to $-1/3 \leq \mu \leq 0$.

The dissipation term configuration given by Eq.~\eqref{eq:DissOne} and Eq.~\eqref{eq:DissTwo} corresponds to setting the generalized dissipation operator spectrum $D_{\ab}(k)$ equal to 
\begin{equation}
D(k) = \mattwo{D_1(k)}{0}{\mu d(k)}{D_2 (k)+d(k)}, \label{eq:GeneralDissipationMatrix}
\end{equation}
with $D_1(q)$, $D_2(q)$, and $d(q)$ given by
\begin{equation}
D_1(k) = \nu k^{2p+2}  \text{ and } D_2 (k) = (\nu +\gD\nu) k^{2p+2}  \text{ and } d(k) = \gl \nu_E k^2.
\end{equation}
Note that for $\mu=0$ and $\gl=1$, this reduces to the simpler case of streamfunction dissipation.

\subsection{Dissipation rate spectra for the two-layer model}

 We may now leverage Eq.~\eqref{eq:DissEk} and Eq.~\eqref{eq:DissGk} to calculate the energy and potential enstrophy dissipation rate spectrum $D_E(k)$ and $D_G(k)$ in terms of the streamfunction spectra  $U_1(k)$, $U_2(k)$, and $C_{12}(k)$. For the case of the energy dissipation rate spectrum $D_E (k)$, noting that $D_{12}(k)=0$, a simple calculation gives 
\begin{align}
D_E (k) &= 2D_{11}(k) U_1(k) + 2D_{22}(k) U_2 (k) + 2D_{21}(k) C_{21}(k) \\
&= 2D_{11}(k)  U_1(k) + 2D_{22}(k) U_2(k) + D_{21}(k) [2C_{12}(k)-U(k)] + D_{21}(k) U(k)  \\
&= [2D_{11}(k) + D_{21}(k)] U_1(k)  + [2D_{22}(k) +D_{21}(k)] U_2(k) + D_{21}(k) [2C_{12}(k)-U(k)] \\
&= A_E^{(1)}(k) U_1(k)  + A_E^{(2)}(k) U_2(k) + A_E^{(3)}(k)[2C_{12}(k)-U(k)],
\end{align}
with $A_E^{(1)}(k)$, $A_E^{(2)}(k)$, and $A_E^{(3)}(k)$ given by
\begin{align}
A_E^{(1)}(k) &= 2D_{11}(k) + D_{21}(k) = 2D_1(k)  + \mu d(k), \\
A_E^{(2)}(k) &= 2D_{22}(k) +D_{21}(k) = 2D_2(k)  + 2d(k)  +\mu d(k), \\
A_E^{(3)}(k) &= D_{21}(k) = \mu d(k).
\end{align}
 We  note that terms involving the streamfunction cross-spectrum $C_{12}(k)$ have been reorganized in terms of $2C_{12}(k)-U(k)$  so that we can take advantage of the inequality $2C_{12}(k)-U(k) \leq 0$. For the potential enstrophy dissipation rate spectrum $D_G(k)$, we take advantage of the symmetry assumption $L_{\ab}(k)=L_{\ba}(k)$ to rewrite Eq.~\eqref{eq:DissGk} as 
\begin{align}
D_G (k) &= -2\sum_{\abc} L_{\ab}(k)  D_{\ac}(k)  C_{\bc}(k) = -2\sum_{\abc} L_{\ba}(k)  D_{\ac}(k)  C_{\bc}(k) = -2\sum_{\bc} (LD)_{\bc}(k) C_{\bc}(k).
\end{align}
The components of $(LD)(k)$ are given by
\begin{align}
(LD)(k) &= -\mattwo{a(k)}{b(k)}{b(k)}{a(k)} \mattwo{D_1(k)}{0}{\mu d(k)}{D_2 (k)+d(k)} \\
&= -\mattwo{a(k) D_1(k) + \mu b(k) d(k)}{b(k)  [D_2(k)  + d(k)]}{b(k)  D_1(k)  + \mu a(k)  d(k) }{a(k) [D_2(k)  + d(k)]},
\end{align}
and it follows that $D_G(k)$ is given by
\begin{align}
D_G (k) &= -2\{(LD)_{11}(k)  U_1(k)  + (LD)_{22}(k)  U_2(k)  + [(LD)_{12}(k) +(LD)_{21}(k) ] C_{12}(k)  \} \\
&= -\{[2(LD)_{11}(k)+(LD)_{12}(k)+(LD)_{21}(k)] U_1(k) + [2(LD)_{22}(k)+(LD)_{12}(k)+(LD)_{21}(k)] U_2(k)  \nonumber \\ &\quad + [(LD)_{12}(k)+(LD)_{21}(k)] [2C_{12}(k)  - U(k)] \} \\
&= A_G^{(1)}(k) U_1(k)  + A_G^{(2)}(k) U_2(k) + A_G^{(3)}(k)[2C_{12}(k)-U(k)],
\end{align}
with $A_G^{(1)}(k)$, $A_G^{(2)}(k)$, $A_G^{(3)}(k)$ given by 
\begin{align}
A_G^{(1)}(k)  &= -[2(LD)_{11}(k)+(LD)_{12}(k)+(LD)_{21}(k)] \\
&= 2[a(k)  D_1(k)  + \mu b(k)  d(k) ] + b(k)  D_1(k)  + \mu a(k)  d(k)  +b(k)  [D_2(k) +d(k) ] \\
&= [2a(k) +b(k) ]D_1(k)  + b(k)  D_2(k)  + [2b(k) +a(k) ]\mu d(k)  + b(k)  d(k),  \\
A_G^{(2)}(k) &= -[2(LD)_{22}(k)+(LD)_{12}(k)+(LD)_{21}(k)] \\
&= 2a(k)  [D_2(k)  + d(k)  ]+b(k)  D_1(k)  + \mu a(k)  d(k)  + b(k)  [D_2(k)  + d(k) ]  \\
&= b(k)  D_1(k)  + [2a(k) +b(k) ] D_2(k)  + [2a(k) +b(k) ]d(k)  + \mu a(k) d(k), \\
A_G^{(3)}(k) &= -[(LD)_{12}(k)+(LD)_{21}(k)] = b(k)  D_1(k)  + \mu a(k)  d(k)  + b(k) [D_2(k)  + d(k)] \\
&= b(k)  [D_1(k) + D_2(k) ]+b(k)  d(k)  + \mu a(k)  d(k). 
\end{align}
The above expressions for $D_E(k)$ and $D_G(k)$ are the point of departure for the investigation of the flux inequality under the general case of streamfunction dissipation with extrapolated Ekman dissipation and differential small-scale dissipation. 

\subsection{Sufficient conditions in terms of dissipation coefficients}

As we have discussed previously, to satisfy the flux inequality $k^2 \Pi_E (k) - \Pi_G (k) \leq 0$ for a given wavenumber $k$, it is sufficient to show that $\gD (k,q)\leq 0$ for all wavenumbers $k<q$. Using our previous expressions for the energy dissipation rate $D_E(k)$ and the potential enstrophy dissipation rate $D_G(k)$, we can calculate $\gD (k,q)$. Consequently,  $\gD(k,q)$ is given by
\begin{align}
\gD (k,q) &= k^2 D_E (q) - D_G(q) \\
&= A_1 (k,q) U_1(q) +  A_2 (k,q) U_2(q) +  A_3 (k,q) [2C_{12}(q) - U(q)],
\end{align}
with $A_1(k,q)$, $A_2(k,q)$, and $A_3(k,q)$ given by 
\begin{align}
A_1 (k,q) &= k^2 A_E^{(1)}(q) - A_G^{(1)}(q) \\
&=k^2 [2D_1(q) + \mu d(q)]-[2a(q) +b(q)]D_1(q) - b(q) D_2(q)  - [2b(q)+a(q)]\mu d(q) - b(q) d(q) \\
&=[2k^2-2a(q)-b(q)]D_1(q) - b(q) D_2(q) - b(q) d(q) + [k^2-2b(q)-a(q)]\mu d(q), \\
A_2 (k,q) &= k^2 A_E^{(2)}(q) - A_G^{(2)}(q) \\
&= k^2[2D_2(q)+\mu d(q) +2d(q)] - b(q) D_1(q) - [2a(q)+b(q)]D_2(q)  \nonumber \\ &\quad -[2a(q)+b(q)]d(q) - \mu a(q) d(q) \\
&= -b(q) D_1(q)+[2k^2-2a(q)-b(q)]D_2(q) + [2k^2-2a(q)-b(q)]d(q)  + \mu [k^2-a(q)]d(q), \\
A_3 (k,q) &= k^2 A_E^{(3)}(q) - A_G^{(3)}(q) = k^2 \mu d(q) - b(q)[D_1(q)+D_2(q)]-b(q) d(q) - \mu a(q) d(q) \\
&= -b(q) [D_1(q)+D_2(q)]-b(q) d(q) + \mu [k^2-a(q)]d(q).
\end{align}
We observe that $U_1(q)\geq 0$ and $U_2(q)\geq 0$ and $2C_{12}(q)-U(q)\leq 0$, consequently the sign of $\gD(k,q)$ depends on the sign of the coefficients $A_1(k,q)$, $A_2(k,q)$, and $A_3(k,q)$.  For the argument below we may assume that $-1<\mu<0$ and $D_1(q) \leq D_2(q)$. Here $D_1(q) < D_2(q)$ corresponds to differential small-scale diffusion (i.e $\gD\nu >0$) and $D_1(q)=D_2(q)$ corresponds to symmetric small-scale dissipation (i.e $\gD\nu = 0$). We begin our argument with the following lemma:

\begin{lemma}
Assume that $b(q)<0$ and $k^2-a(q)-b(q)<0$. Assume also streamfunction dissipation with both differential small-scale dissipation and extrapolated Ekman dissipation with $-1<\mu<0$. Then $A_3 (k,q)\geq 0$, and furthermore, if $D_1(q)\leq D_2(q)$, then we also have $A_2 (k,q)\leq 0$. 
\label{prop:theLemma}
\end{lemma}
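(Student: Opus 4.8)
The plan is to establish the two claims separately by reading off the sign of each term in the expressions for $A_3(k,q)$ and $A_2(k,q)$ derived just above, using throughout that the assumed dissipation configuration makes $D_1(q)$, $D_2(q)$, and $d(q)$ all nonnegative. Before beginning, I would record two elementary identities that organize the entire argument. Since $a(q)$ and $b(q)$ enter the coefficients only through the combinations $k^2-a(q)$ and $2k^2-2a(q)-b(q)$, I would rewrite both in terms of the ``master quantity'' $k^2-a(q)-b(q)$ and the off-diagonal entry $b(q)$:
\begin{align*}
k^2-a(q) &= [k^2-a(q)-b(q)]+b(q), \\
2k^2-2a(q)-b(q) &= 2[k^2-a(q)-b(q)]+b(q).
\end{align*}
Under the two hypotheses $b(q)<0$ and $k^2-a(q)-b(q)<0$, each of these combinations is a sum of negative terms and is therefore itself negative; this is the only structural fact I will use.

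The bound $A_3(k,q)\geq 0$ should then be immediate, requiring no cancellation. Each of its three terms is nonnegative: the factor $-b(q)>0$ multiplies the nonnegative quantities $D_1(q)+D_2(q)$ and $d(q)$, while the remaining term $\mu[k^2-a(q)]d(q)$ is the product of the negative factor $\mu$ (from $-1<\mu<0$), the negative factor $k^2-a(q)$, and the nonnegative factor $d(q)$. Notably, this half does not use the ordering $D_1(q)\leq D_2(q)$, consistent with the statement.

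I expect the only real work to be the bound $A_2(k,q)\leq 0$, because $A_2$ carries two genuinely positive contributions, $-b(q)D_1(q)$ and the Ekman term $\mu[k^2-a(q)]d(q)$, that must be dominated by the negative contributions. My plan is to split $A_2$ into a diffusion part, $-b(q)D_1(q)+[2k^2-2a(q)-b(q)]D_2(q)$, and an Ekman part, $\{[2k^2-2a(q)-b(q)]+\mu[k^2-a(q)]\}d(q)$, and to bound each separately. For the diffusion part, the hypothesis $D_1(q)\leq D_2(q)$ together with $-b(q)>0$ gives $-b(q)D_1(q)\leq -b(q)D_2(q)$, after which the two $D_2(q)$-terms combine and the coefficient collapses to $-b(q)+[2k^2-2a(q)-b(q)]=2[k^2-a(q)-b(q)]$, which is negative; hence the diffusion part is at most $2[k^2-a(q)-b(q)]D_2(q)\leq 0$. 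For the Ekman part, substituting the two identities into the bracket and collecting terms should yield the clean decomposition $(2+\mu)[k^2-a(q)-b(q)]+(1+\mu)b(q)$; since $-1<\mu<0$ makes both $2+\mu$ and $1+\mu$ positive while both $k^2-a(q)-b(q)$ and $b(q)$ are negative, the bracket is strictly negative, and multiplying by $d(q)\geq 0$ closes the estimate. Adding the two nonpositive parts yields $A_2(k,q)\leq 0$. The crux of the whole lemma is thus this single algebraic regrouping of the Ekman bracket, since it is precisely the step showing that the extrapolated-Ekman parameter $\mu$ cannot, within the range $-1<\mu<0$, overturn the negativity enforced by the master quantity.
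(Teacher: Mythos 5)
Your proof is correct and follows essentially the same route as the paper: the $A_3$ bound is verified term by term using $k^2-a(q)=[k^2-a(q)-b(q)]+b(q)<0$, and your regrouping of $A_2$ into $2[k^2-a(q)-b(q)]D_2(q)$ plus the Ekman bracket $(2+\mu)[k^2-a(q)-b(q)]+(1+\mu)b(q)$ is algebraically the same decomposition the paper writes as $-b(q)[D_1(q)-D_2(q)]+[k^2-a(q)-b(q)][2D_2(q)+(2+\mu)d(q)]+(\mu+1)b(q)d(q)$, with your inequality step $-b(q)D_1(q)\leq -b(q)D_2(q)$ merely absorbing the term $-b(q)[D_1(q)-D_2(q)]\leq 0$ that the paper keeps explicit.
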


\begin{proof}
We recall that $A_3 (k,q)$ is given by
\begin{equation}
A_3 (k,q) = -b(q) [D_1(q)+D_2(q)]-b(q) d(q) + \mu [k^2-a(q)]d(q).
\end{equation}
Since, by definition, $D_1(q)\geq  0$, and $D_2(q)\geq 0$, and $d(q)\geq 0$, and since $b(q)<0$, and $k^2-a(q)=[k^2-a(q)-b(q)]+b(q)<k^2-a(q)-b(q)<0$, and $\mu <0$, it follows that all contributing terms to $A_3(k,q)$ are positive and therefore $A_3 (k,q)\geq 0$. For the case of $A_2(k,q)$, let us assume first that $D_1(q)\leq D_2(q)$. We rewrite $A_2(k,q)$ as follows:
\begin{align}
A_2 (k,q) &= -b(q) D_1(q)+[2k^2-2a(q)-b(q)]D_2(q) + [2k^2-2a(q)-b(q)]d(q)  + \mu [k^2-a(q)]d(q)\\
&= -b(q) [D_1(q)-D_2(q)]+2[k^2-a(q)-b(q)]D_2(q) \nonumber \\ &\quad   + 2[k^2-a(q)-b(q)]d(q) + b(q)d(q)+\mu [k^2-a(q)-b(q)]d(q)+\mu b(q)d(q) \\
&= -b(q)[D_1(q)-D_2(q)]+[k^2-a(q)-b(q)][2D_2(q)+(2+\mu) d(q)]   + (\mu+1)b(q)d(q).
\end{align}
Since $b(q)<0$, and $k^2-a(q)-b(q)<0$,  and $\mu+2>0$, and $\mu+1>0$, we see that all contributing terms to $A_2(k,q)$ are negative and therefore $A_2(k,q)\leq 0$. This concludes the proof.
\end{proof}

\begin{proposition}
Assume streamfunction dissipation with both differential small-scale dissipation and extrapolated Ekman dissipation with $-1<\mu<0$. Assume also that $k^2-a(q)-b(q)<0$, and $b(q)<0$, and $D_1 (q) \geq 0$, and $D_2 (q) \geq 0$, and $\gD D(q)\equiv D_2(q)-D_1(q)\geq 0$, and also that $D_1(q)$, $\gD D(q)$, and $d(q)$ satisfy 
\begin{equation}
\frac{2D_1(q) + \mu d(q)}{[D_2(q)-D_1(q)]+(\mu+1)d(q)} > \frac{b(q)}{k^2-a(q)-b(q)}. \label{eq:GeneralSuffCondition}
\end{equation}
Then it follows that $\gD(k,q)\leq 0$. 
\label{prop:GeneralSuffCondition}
\end{proposition}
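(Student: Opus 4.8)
The plan is to exploit the decomposition of $\gD(k,q)$ into $A_1(k,q) U_1(q) + A_2(k,q) U_2(q) + A_3(k,q)[2C_{12}(q) - U(q)]$ together with the three sign facts already available: $U_1(q)\geq 0$, $U_2(q)\geq 0$, and $2C_{12}(q)-U(q)\leq 0$. If I can show that the coefficients carry the matching signs, namely $A_1(k,q)\leq 0$, $A_2(k,q)\leq 0$, and $A_3(k,q)\geq 0$, then each of the three products is non-positive and summing them yields $\gD(k,q)\leq 0$ at once. The hypotheses of the proposition are chosen to make exactly this sign pattern achievable.

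Two of the three sign conditions come for free. Lemma~\ref{prop:theLemma} is stated under precisely the hypotheses assumed here ($b(q)<0$, $k^2-a(q)-b(q)<0$, $-1<\mu<0$, and $D_1(q)\leq D_2(q)$, the last being the assumption $\gD D(q)\geq 0$), and it delivers $A_3(k,q)\geq 0$ and $A_2(k,q)\leq 0$. Hence the only coefficient left to control is $A_1(k,q)$, and this is where the sufficient condition \eqref{eq:GeneralSuffCondition} must enter. The first concrete step is therefore algebraic: using $2k^2-2a-b=2(k^2-a-b)+b$ and $k^2-2b-a=(k^2-a-b)-b$, I would collapse the given expression for $A_1$ into the compact factored form
\[
A_1(k,q) = [k^2-a(q)-b(q)]\,[2D_1(q)+\mu d(q)] - b(q)\,\bigl[\gD D(q)+(\mu+1)d(q)\bigr].
\]
This reorganization is the crux of the argument, because the two bracketed factors multiplying $[k^2-a(q)-b(q)]$ and $-b(q)$ are exactly the numerator and denominator that appear in the sufficient condition.

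With $A_1$ in this form the remainder is a careful sign computation. Writing $X=k^2-a(q)-b(q)<0$, $N=2D_1(q)+\mu d(q)$, and $M=\gD D(q)+(\mu+1)d(q)$, I would first note that $M>0$, since $\gD D(q)\geq 0$, $d(q)\geq 0$, and $\mu+1>0$; this guarantees that the fraction in \eqref{eq:GeneralSuffCondition} is well defined, and, because both $b(q)$ and $X$ are negative, its right-hand side $b(q)/X$ is positive. The condition $N/M>b(q)/X$ then rearranges to $N>(b(q)/X)M$, and multiplying through by $X<0$ reverses the inequality to $XN<b(q)M$, i.e. $A_1(k,q)=XN-b(q)M<0$. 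Combining this with the two signs from Lemma~\ref{prop:theLemma} and the sign facts on $U_1$, $U_2$, and $2C_{12}-U$ shows that all three products are non-positive, hence $\gD(k,q)\leq 0$.

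The anticipated obstacle is bookkeeping rather than conceptual: the collapse of $A_1$ into the factored form must be carried out exactly, and the concluding step requires tracking two sign reversals coming from the two negative quantities $X$ and $b(q)$ without error. One should also verify at the outset that the denominator $M$ is strictly positive, so that \eqref{eq:GeneralSuffCondition} is meaningful and the inequality manipulation is legitimate; the degenerate case $M=0$ (no differential small-scale dissipation and no extrapolated Ekman term) lies outside the scope of this proposition and is instead covered by the earlier symmetric streamfunction dissipation result.
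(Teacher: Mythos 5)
Your proposal is correct and follows essentially the same route as the paper's own proof: the same decomposition of $\gD(k,q)$, the same appeal to Lemma~\ref{prop:theLemma} for the signs of $A_2$ and $A_3$, and the same factored form $A_1=[k^2-a(q)-b(q)][2D_1(q)+\mu d(q)]-b(q)[\gD D(q)+(\mu+1)d(q)]$ from which condition~\eqref{eq:GeneralSuffCondition} yields $A_1\leq 0$. Your explicit remark that the denominator must be strictly positive for the rearrangement to be legitimate is a small point of care beyond what the paper states, but it does not change the argument.
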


\begin{proof}
We recall that $\gD(k,q)$ is given by 
\begin{align}
\gD (k,q) &= A_1 (k,q) U_1(q) +  A_2 (k,q) U_2(q) +  A_3 (k,q) [2C_{12}(q) - U(q)].
\end{align}
Using the previous lemma, from the given assumptions above, we have $A_2(k,q)\leq 0$ and $A_3(k,q)\geq 0$. Now let us rewrite $A_1(k,q)$ as
\begin{align}
A_1(k,q) &=[2k^2-2a(q)-b(q)]D_1(q) - b(q) D_2(q) - b(q) d(q)   + [k^2-2b(q)-a(q)]\mu d(q)\\
&= 2[k^2-a(q)-b(q)]D_1(q)-b(q)[D_2(q)-D_1(q)]-b(q)d(q) \nonumber \\ &\quad +[k^2-a(q)-b(q)]\mu d(q)-\mu b(q)d(q) \\
&= [k^2-a(q)-b(q)][2D_1(q)+\mu d(q)]-b(q)[D_2(q)-D_1(q)+(\mu+1)d(q)].
\end{align}
Since $k^2-a(q)-b(q)<0$ and $D_2(q)-D_1(q)+(\mu+1)d(q)>0$, it follows that $A_1(k,q)\leq 0$ if and only if Eq.~\eqref{eq:GeneralSuffCondition} is satisfied. Thus, since we also know that $U_1(q)\geq 0$, and $U_2(q)\geq 0$, and $2C_{12}(q)-U(q)\leq 0$, it follows that all terms contributing to $\gD (k,q)$  are negative, and therefore $\gD (k,q)\leq 0$.
\end{proof}

\subsection{Discussion of sufficient conditions in terms of dissipation coefficients}

We will now use Proposition~\ref{prop:GeneralSuffCondition} to extract sufficient conditions to satisfy the flux inequality for the four dissipation term  configurations, outlined in the beginning of this section, in terms of the dissipation term coefficients. Our goal is to explore the restrictiveness of these conditions for each configuration.  For the first dissipation configuration, we consider streamfunction dissipation with both differential small-scale dissipation and extrapolated Ekman dissipation, with the surface layer placed at $1$Atm. Mathematically, this corresponds to using $D_1(q)=\nu q^{2p+2}$, and $d(q)=(3/2)\nu_E q^2$ (since $\gl=3/2$), and $D_2(q)-D_1(q)=\gD\nu q^{2p+2}$, and $\mu=-1/3$.  It is easy to show that, given these choices, Proposition~\ref{prop:GeneralSuffCondition} gives the following statement
\begin{equation}
0 < \frac{\gD\nu q^{2p}+\nu_E}{4\nu q^{2p}-\nu_E} < \frac{q^2-k^2}{k_R^2} \implies \gD (k,q) \leq  0. \label{eq:SuffCondOne}
\end{equation}
Note that the hypothesis  given by Eq.~\eqref{eq:SuffCondOne} requires that $\nu_E <4\nu q^{2p}$, which ensures that both sides of Eq.~\eqref{eq:GeneralSuffCondition} are positive. We may then invert both sides of Eq.~\eqref{eq:GeneralSuffCondition} in the process of  obtaining Eq.~\eqref{eq:SuffCondOne}. On the other hand, for $\nu_E >4\nu q^{2p}$,  Eq.~\eqref{eq:GeneralSuffCondition}  is violated, as its left-hand side becomes negative while its right-hand side remains positive. More precisely, in Eq.~\eqref{eq:GeneralSuffCondition}, the right-hand side is positive for $q>k$, the denominator of the left-hand side satisfies $D_2(q)-D_1(q)+(\mu+1)d(q)>0$ by the given choices for $D_1 (q)$,  $D_2 (q)$, $d(q)$, and $\mu$, and the constraint $\nu_E <4\nu q^{2p}$ is needed to ensure that the numerator $2D_1(q) + \mu d(q)$ is not negative, so that it can be possible for Eq.~\eqref{eq:GeneralSuffCondition} to be satisfied. Consequently, we see that increasing either $\nu_E$ or $\gD\nu$ indicates a tendency towards violating the flux inequality. The role of differential diffusion is very important here since, for $\gD\nu>0$, the left-hand-side of the hypothesis in Eq.~\eqref{eq:SuffCondOne} will approach $\gD\nu/(4\nu)$ and remain bounded for large wavenumbers $q$, whereas for $\gD\nu=0$, the same left-hand-side will vanish rapidly to zero with increasing wavenumber $q$. As a result, violating the flux inequality may become easier under differential small-scale dissipation $\gD\nu$. On the other hand, the role of $\nu_E$ becomes even more dramatic, since increasing $\nu_E$ from $0$ towards $4\nu q^{2p}$ will result in a hyperbolic blow-up of the left-hand-side of the hypothesis of Eq.~\eqref{eq:SuffCondOne}, thus yielding an even more rapid violation of the hypothesis. 

 Now, let us consider the second dissipation term configuration where we eliminate extrapolated Ekman dissipation but retain differential small-scale dissipation. This corresponds to choosing $\mu=0$ and $\gl=1$ (i.e. the Ekman term is now at the lower layer and not extrapolated into the surface layer), with $D_1(q)=\nu q^{2p+2}$, and $d(q)=\nu_E q^2$ (since $\gl=1$), and $D_2(q)-D_1(q)=\gD\nu q^{2p+2}$.  Proposition~\ref{prop:GeneralSuffCondition} will now reduce to  the statement given by
\begin{equation}
\frac{\gD\nu q^{2p}+\nu_E}{4\nu q^{2p}} < \frac{q^2-k^2}{k_R^2}  \implies \gD (k,q) \leq  0, \label{eq:SuffCondTwo}
\end{equation}
where the hyperbolic blow-up is no longer possible. Differential small-scale dissipation however maintains its tendency towards violating the flux inequality for increasing $\gD\nu$ since the left-hand-side in the hypothesis of Eq.~\eqref{eq:SuffCondTwo} still approaches $\gD\nu/(4\nu)$ in the limit of large wavenumbers $q$, and does not vanish. Comparing Eq.~\eqref{eq:SuffCondOne} with Eq.~\eqref{eq:SuffCondTwo}, we see that the presence of $\nu_E$ in the denominator of the left-hand-side fraction of Eq.~\eqref{eq:SuffCondOne}  is due to the use of extrapolated Ekman dissipation.

It is also interesting to consider the third dissipation term configuration in which we eliminate differential small-scale dissipation but retain extrapolated Ekman dissipation. This corresponds to choosing $\mu=-1/3$ and $\gD\nu=0$, with $D_1(q)=D_2(q)=\nu q^{2p+2}$, and $d(q)=(3/2)\nu_E q^2$ (since $\gl=3/2$). The statement of Eq.~\eqref{eq:SuffCondOne} can be simplified to read
\begin{equation}
\frac{\nu_E}{4\nu q^{2p}} < \frac{q^2-k^2}{k_R^2+(q^2-k^2)} \implies \gD (k,q) \leq  0. \label{eq:SuffCondThree}
\end{equation}
Now, let us compare Eq.~\eqref{eq:SuffCondThree} against the fourth dissipation term configuration  where both differential small-scale dissipation and extrapolated Ekman dissipation are eliminated (i.e. $\mu=0$ and $\gD\nu=0$, with  $D_1(q)=D_2(q)=\nu q^{2p+2}$, and $d(q)=\nu_E q^2$ (since $\gl=1$)). The corresponding sufficient condition  is given by 
\begin{equation}
\frac{\nu_E}{4\nu q^{2p}} \leq \frac{q^2-k^2}{k_R^2} \implies \gD(k,q) \leq 0. \label{eq:OrigSuffCond}
\end{equation}

We see that  in the absense of both extrapolated Ekman dissipation and differential small-scale dissipation,  the sufficient condition to satisfy the flux inequality is easily satisfied since the left-hand-side of Eq.~\eqref{eq:OrigSuffCond} vanishes with increasing wavenumber $q$ whereas the right-hand side increases quadratically with $q$. The only way to frustrate the sufficient condition and hope to be able to violate the flux inequality is by adjusting the hyperdissipation coefficient $\nu$ with increasing numerical resolution, such that $\nu_E k_R^2/(4\nu q_{\text{max}}^{2p+2})$ remains constant, with $q_{\text{max}}$ the maximum resolved wavenumber. Such an  adjustment of hyperdissipation was indeed necessary in the Tung-Orlando simulation \cite{article:Orlando:2003} of the two-layer quasi-geostrophic  model, opening it to criticism \cite{article:Smith:2004,article:Tung:2004}. On the other hand, the sufficient conditions for the other three cases indicate that the need for this kind of adjustment may be diminished. Extrapolated Ekman dissipation alone stabilizes the growth of the right-hand side of the sufficient condition in Eq.~\eqref{eq:SuffCondThree}  but does not stop the left-hand side from vanishing. This situation is considerably improved, as can be seen from Eq.~\eqref{eq:SuffCondOne} and Eq.~\eqref{eq:SuffCondTwo}, when we introduce differential small-scale dissipation. In fact, under the first configuration, corresponding to Eq.~\eqref{eq:SuffCondOne}, all it takes to violate the sufficient condition is to ensure that $\nu_E > 4\nu q^{2p}$ for all wavenumbers $q$ in the inertial and dissipation range. 

 It should be stressed that in the above discussion, the hypotheses given by Eq.~\eqref{eq:SuffCondOne}--\eqref{eq:SuffCondThree} are sufficient conditions but not necessary conditions. A violation of Eq.~\eqref{eq:GeneralSuffCondition} will ensure that the term $A_1(k,q)U_1(q)$ gives a positive contribution to $\gD(k,q)$. However, according to Lemma~\ref{prop:theLemma}, the contributions of $A_2 (k,q) U_2(q)$ and $A_3 (k,q) [2C_{12}(q) - U(q)]$ will remain negative, so the sign of $\gD (k,q)$ is dependant on which term gets to be dominant. Therefore, it is far from a foregone conclusion that a violation of the flux inequality is possible under the dissipation configurations considered above. However, the significant tightening of the sufficient condition with the introduction of extrapolated Ekman dissipation and differential small-scale dissipation indicates that a violation of the flux inequality may be  becoming easier to achieve, under these configurations.

\subsection{Sufficient conditions in terms of streamfunction spectra}

We would now like to consider statements providing sufficient conditions for satisfying the flux inequality, formulated in terms of the streamfunction spectra $U_1(q)$, $U_2(q)$, and $C_{12}(q)$, for the dissipation configuration given by Eq.~\eqref{eq:DissOne} and Eq.~\eqref{eq:DissTwo}, i.e. streamfunction dissipation with differential small-scale dissipation and extrapolated Ekman dissipation. These conditions constrain the spectrum $C_{12}(q)$ with respect to $U_1(q)$ and $U_2(q)$, and they imply corresponding constraints on the distribution of energy and potential enstrophy between layers, to be explored in future work. Furthermore, they are independent of the detailed definitions of the dissipation term operator spectra given by $D_1 (k)$, $D_2 (k)$, and $d(k)$. 

We will derive propositions for three separate cases. Proposition~\ref{prop:StreamfunctionAsymmetricDissipationTwo} corresponds to streamfunction dissipation without differential small-scale dissipation and without extrapolated Ekman dissipation. Proposition~\ref{prop:DiffEkmanSpectraCond} corresponds to streamfunction dissipation with both differential small-scale dissipation and extrapolated Ekman dissipation. Finally, Proposition~\ref{prop:EkmanSpectraCond} corresponds to streamfunction dissipation with extrapolated Ekman dissipation but without differential small-scale dissipation. We will see that the corresponding constraints on the streamfunction spectrum $C_{12}(q)$ become tighter upon introducing differential small-scale dissipation, extrapolated Ekman dissipation, or a combination of both.

The first step towards deriving the propositions below is to rewrite $\gD(k,q)$ in terms of $D_1(q)$, $D_2(q)$, and $d(q)$ as follows
\begin{equation}
\gD (k,q) = B_1(k,q) D_1(q) + B_2(k,q) D_2(q) + B_3 (k,q)d(q),
\end{equation}
with $B_1(k,q)$, $B_2(k,q)$, and $B_3(k,q)$ given by 
\begin{align}
B_1(k,q)  &= [2k^2-2a(q)-b(q)] U_1(q)- b(q)U_2(q)-b(q)[2C_{12}(q)-U(q)] \\
&= 2[k^2-a(q)]U_1(q)-2b(q)C_{12}(q), \\
B_2 (k,q) &= -b(q)U_1(q)+[2k^2-2a(q)-b(q)]U_2(q)-b(q)[2C_{12}(q)-U(q)] \\
&= 2[k^2-a(q)]U_2(q)-2b(q)C_{12}(q), \\
B_3 (k,q) &= -b(q)U_1(q)+[k^2-a(q)-2b(q)]\mu U_1(q)+[2k^2-2a(q)-b(q)]U_2(q) \nonumber \\ &\quad +\mu [k^2-a(q)]U_2(q)-b(q)[2C_{12}(q)-U(q)]+\mu [k^2-a(q)][2C_{12}(q)-U(q)] \\
&= 2[k^2-a(q)]U_2(q)-2b(q)C_{12}(q) \nonumber \\ &\quad +\mu [k^2-a(q)][U_1(q)+U_2(q)+2C_{12}(q)-U(q)]-2\mu b(q)U_1(q) \\
&= 2[k^2-a(q)]U_2(q)-2b(q)C_{12}(q)+\mu [k^2-a(q)]2C_{12}(q)-2\mu b(q)U_1(q). 
\end{align}

We now use the above equations to derive the following propositions:

\begin{proposition}
\label{prop:StreamfunctionAsymmetricDissipationTwo}
Assume  that $k^2-a(q)-b(q)<0$ and $b(q)<0$. We also assume  the dissipation configuration given by Eq.~\eqref{eq:GeneralDissipationMatrix} with $\mu=0$, and $d(q) \geq 0$, and $D_1(q)=D_2(q)\equiv D(q) \geq 0$ (i.e. symmetric small-scale streamfunction dissipation with a standard Ekman term). It follows that if $C_{12}(q)\leq U_2(q)$, then $\gD(k,q)\leq 0$. 
\end{proposition}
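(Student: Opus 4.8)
The plan is to specialize the general expression $\gD(k,q) = B_1(k,q)D_1(q) + B_2(k,q)D_2(q) + B_3(k,q)d(q)$ to the present configuration and then exploit the nonnegativity of the dissipation spectra. Setting $\mu=0$ collapses $B_3(k,q)$ onto $B_2(k,q)$, since the two $\mu$-dependent terms in the formula for $B_3(k,q)$ vanish; and setting $D_1(q)=D_2(q)=D(q)$ merges the first two terms. First I would therefore rewrite
\begin{equation}
\gD(k,q) = [B_1(k,q)+B_2(k,q)]\,D(q) + B_2(k,q)\,d(q),
\end{equation}
reducing the problem to showing that the two coefficients $B_1(k,q)+B_2(k,q)$ and $B_2(k,q)$ are both nonpositive, after which $D(q)\geq 0$ and $d(q)\geq 0$ close the argument immediately.

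For the combined coefficient I would compute $B_1(k,q)+B_2(k,q) = 2[k^2-a(q)][U_1(q)+U_2(q)] - 4b(q)C_{12}(q)$ and control the cross-spectrum term using the arithmetic--geometric mean inequality $2C_{12}(q)\leq U_1(q)+U_2(q)$ established in Section 2. Since $-b(q)>0$, this gives $B_1(k,q)+B_2(k,q) \leq 2[k^2-a(q)-b(q)][U_1(q)+U_2(q)]$, which is nonpositive by the hypothesis $k^2-a(q)-b(q)<0$ together with $U_1(q),U_2(q)\geq 0$. Notably, this step uses only the generic AM--GM bound and the sign condition, and not the special hypothesis on $C_{12}(q)$.

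The special hypothesis $C_{12}(q)\leq U_2(q)$ enters precisely in the second coefficient. Writing $B_2(k,q) = 2[k^2-a(q)]U_2(q) - 2b(q)C_{12}(q)$ and using $-2b(q)>0$ together with $C_{12}(q)\leq U_2(q)$, I would replace $C_{12}(q)$ by $U_2(q)$ to obtain $B_2(k,q) \leq 2[k^2-a(q)-b(q)]U_2(q)\leq 0$. Combining the two bounds then yields $\gD(k,q)\leq 0$, as claimed.

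The main subtlety --- and the reason the hypothesis takes the asymmetric form $C_{12}(q)\leq U_2(q)$ rather than a symmetric AM--GM bound --- lies in the choice of grouping. The alternative grouping $\gD(k,q) = B_1(k,q)D(q) + B_2(k,q)[D(q)+d(q)]$ would demand $B_1(k,q)\leq 0$ separately, but $B_1(k,q) = 2[k^2-a(q)]U_1(q)-2b(q)C_{12}(q)$ need not be nonpositive when $C_{12}(q)$ is large and positive relative to $U_1(q)$. Grouping instead as $[B_1(k,q)+B_2(k,q)]D(q) + B_2(k,q)d(q)$ lets the combined coefficient absorb the cross term through AM--GM, while isolating into $B_2(k,q)$ exactly the inequality that the hypothesis $C_{12}(q)\leq U_2(q)$ is tailored to handle. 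Recognizing this grouping, and checking that $\mu=0$ forces $B_3(k,q)=B_2(k,q)$, is the only nonroutine part of the argument; everything else is algebraic substitution and sign-checking.
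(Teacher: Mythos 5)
Your proof is correct and follows essentially the same route as the paper's: the same grouping $\gD(k,q)=[B_1(k,q)+B_2(k,q)]D(q)+B_3(k,q)d(q)$ (with $B_3=B_2$ at $\mu=0$), the same AM--GM bound $2C_{12}(q)\leq U_1(q)+U_2(q)$ for the combined coefficient, and the same use of $C_{12}(q)\leq U_2(q)$ to control the Ekman coefficient. The only difference is expository, namely your closing discussion of why this particular grouping is the right one.
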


\begin{proof}
We write $\gD(k,q)$, under the assumption of symmetric small-scale dissipation (i.e. $D_1(q)=D_2(q)\equiv D(k)$), as
\begin{equation}
\gD (k,q) = [B_1(k,q)+B_2(k,q)]D(q)+B_3(k,q)d(q).
\end{equation}
 We note that from the given assumptions, we have
\begin{align}
B_1(k,q)+B_2(k,q) &= 2[k^2-a(q)]U_1(q)-2b(q)C_{12}(q)+2[k^2-a(q)]U_2(q)-2b(q)C_{12}(q) \\
&= 2[k^2-a(q)]U(q)-4b(q)C_{12}(q) \\
&= 2[k^2-a(q)-b(q)]U(q)-2b(q)[2C_{12}(q)-U(q)] \leq 0,
\end{align}
using $k^2-a(q)-b(q)<0$, $b(q)<0$, and $2C_{12}(q)-U(q) \leq 0$. From the hypothesis $C_{12}(q)\leq U_2(q)$, we can also show that
\begin{align}
B_3 (k,q) &= 2[k^2-a(q)] U_2(q)-2b(q)C_{12}(q) \\
&\leq 2[k^2-a(q)] U_2(q)-2b(q)U_2(q) = 2[k^2-a(q)-b(q)]U_2(q) \leq 0.
\end{align}
Since $d(q) \geq 0$ and $D(q) \geq 0$, it follows that $\gD(k,q)\leq 0$.
\end{proof}



\begin{proposition}
Assume that $b(q)<0$ and $k^2-a(q)-b(q)<0$. We also assume the most general dissipation configuration given by Eq.~\eqref{eq:GeneralDissipationMatrix}  with $D_1 (q)\geq 0$, and $D_2 (q)\geq 0$, and $d(q)\geq 0$, and $-1<\mu <0$ (i.e. streamfunction dissipation with both differential small-scale dissipation and extrapolated Ekman dissipation). It follows that 
\begin{enumerate}
\item If $C_{12}(q)\leq 0$, then $\gD (k,q) \leq 0$.
\item If $C_{12}(q) \leq \min \{U_1(q), U_2(q)\}$ and $U_1(q)+\mu U_2(q) \geq 0$, then $\gD(k,q) \leq 0$.
\end{enumerate}
\label{prop:DiffEkmanSpectraCond}
\end{proposition}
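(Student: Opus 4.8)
The plan is to work from the decomposition $\gD(k,q)=B_1(k,q)D_1(q)+B_2(k,q)D_2(q)+B_3(k,q)d(q)$ established just above the statement. Since the hypotheses pin down only $D_1(q)\ge 0$, $D_2(q)\ge 0$, $d(q)\ge 0$ and leave their magnitudes free, and since these three dissipation spectra enter $\gD$ independently, the robust route is to prove each coefficient separately nonpositive, i.e. $B_1(k,q)\le 0$, $B_2(k,q)\le 0$, and $B_3(k,q)\le 0$; then the conclusion $\gD(k,q)\le 0$ is immediate. Before splitting into cases I would first record the sign facts that drive everything: from $b(q)<0$ together with $k^2-a(q)-b(q)<0$ one gets $k^2-a(q)<0$, hence $-2b(q)>0$ and $\mu[k^2-a(q)]>0$ (a product of two negatives), so the coefficient of $C_{12}(q)$ in $B_3$, namely $\kappa=-2b(q)+2\mu[k^2-a(q)]$, is strictly positive; finally $-2\mu b(q)U_1(q)\le 0$ because $\mu b(q)>0$.

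For the first assertion, $C_{12}(q)\le 0$, no further input is needed. In the compact forms $B_1=2[k^2-a(q)]U_1(q)-2b(q)C_{12}(q)$ and $B_2=2[k^2-a(q)]U_2(q)-2b(q)C_{12}(q)$, each is the sum of a manifestly nonpositive term (a negative coefficient times a nonnegative $U$) and the term $-2b(q)C_{12}(q)$, which is nonpositive since $-2b(q)>0$ and $C_{12}(q)\le 0$. Applying the sign facts term by term to $B_3$, and using $\kappa\,C_{12}(q)\le 0$ for the positive $\kappa$, shows $B_3\le 0$ as well, so $\gD(k,q)\le 0$.

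The second assertion is where the real work lies, because $C_{12}(q)$ may now be positive. For $B_1$ I would bound $-2b(q)C_{12}(q)\le -2b(q)U_1(q)$ using $C_{12}(q)\le U_1(q)$ and $-2b(q)>0$, giving $B_1\le 2[k^2-a(q)-b(q)]U_1(q)\le 0$; symmetrically, $C_{12}(q)\le U_2(q)$ gives $B_2\le 2[k^2-a(q)-b(q)]U_2(q)\le 0$. The crux is $B_3$. Because its coefficient of $C_{12}(q)$ is the strictly positive $\kappa$, that term cannot simply be discarded; instead I would replace $C_{12}(q)$ by the envelope $\min\{U_1(q),U_2(q)\}$, splitting $\kappa$ into its two constituent pieces $-2b(q)$ and $2\mu[k^2-a(q)]$ and assigning each piece to whichever of the bounds $C_{12}(q)\le U_1(q)$ or $C_{12}(q)\le U_2(q)$ makes the surviving $U_1(q)$- and $U_2(q)$-coefficients assemble into the single linear combination featured in the hypothesis. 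After this substitution $B_3$ collapses to a nonpositive multiple of $[k^2-a(q)-b(q)]$ times that linear combination, and the constraint $U_1(q)+\mu U_2(q)\ge 0$ is precisely what is required to force $B_3\le 0$.

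The main obstacle, then, is the bookkeeping for $B_3$: the coefficient of $C_{12}(q)$ carries no sign-friendly structure on its own, so the two envelope bounds must be combined in exactly the right proportion, with $-1<\mu<0$ controlling the weights, so that the leftover $U_1(q)$ and $U_2(q)$ terms collapse into the hypothesized linear form rather than into some other combination the hypotheses do not control. Once $B_1,B_2,B_3\le 0$ are all secured, $\gD(k,q)=B_1 D_1+B_2 D_2+B_3 d\le 0$ follows at once from $D_1(q),D_2(q),d(q)\ge 0$, settling both parts.
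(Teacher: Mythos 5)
Your strategy is exactly the paper's: start from the decomposition $\gD(k,q)=B_1(k,q)D_1(q)+B_2(k,q)D_2(q)+B_3(k,q)d(q)$, show each coefficient $B_i(k,q)\leq 0$ separately, and conclude from $D_1(q),D_2(q),d(q)\geq 0$. Part (a) and the bounds on $B_1$ and $B_2$ in part (b) are handled identically to the paper, and the sign facts you record ($k^2-a(q)<0$, $-2b(q)>0$, $\mu[k^2-a(q)]>0$, $\mu b(q)>0$) are precisely the ones that drive the paper's argument.

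There is, however, one assertion you make without checking, and it does not come out the way you say. Carrying out your own prescription for $B_3$ --- pair the piece $-2b(q)C_{12}(q)$ with the existing $2[k^2-a(q)]U_2(q)$ term via $C_{12}(q)\leq U_2(q)$, and pair the piece $2\mu[k^2-a(q)]C_{12}(q)$ with the existing $-2\mu b(q)U_1(q)$ term via $C_{12}(q)\leq U_1(q)$ (the opposite pairing produces $2(1+\mu)\{[k^2-a(q)]U_2(q)-b(q)U_1(q)\}$, which has no definite sign) --- yields
\begin{equation*}
B_3(k,q)\;\leq\;2[k^2-a(q)-b(q)]\,[U_2(q)+\mu U_1(q)],
\end{equation*}
so the condition actually required to close the argument is $U_2(q)+\mu U_1(q)\geq 0$, not the hypothesized $U_1(q)+\mu U_2(q)\geq 0$; since $-1<\mu<0$, neither inequality implies the other. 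The same discrepancy appears in the paper's own proof, which derives exactly the bound displayed above and then invokes the hypothesis as stated --- evidently an index transposition in either the statement or the proof. Your plan is therefore sound and matches the paper's, but the sentence ``the constraint $U_1(q)+\mu U_2(q)\geq 0$ is precisely what is required'' is the one step you needed to verify rather than assume: done explicitly, it forces either a swap of the indices in the hypothesis or a different pairing of the envelope bounds, and no different pairing works.
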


\begin{proof}
To show (a) we first note that $k^2-a(q) = [k^2-a(q)-b(q)]+b(q)< k^2-a(q)-b(q)<0$. Combined with the given assumptions, we find that $B_1(k,q)$, $B_2(k,q)$ and $B_3(k,q)$ satisfy 
\begin{align}
B_1 (k,q) &= 2[k^2-a(q)]U_1(q)-2b(q)C_{12}(q) \leq 2[k^2-a(q)]U_1(q) \leq 0, \\
B_2(k,q) &= 2[k^2-a(q)]U_2(q)-2b(q)C_{12}(q) \leq 2[k^2-a(q)]U_2(q) \leq 0, \\
B_3(k,q) &= 2[k^2-a(q)]U_2(q)-2b(q)C_{12}(q)+\mu [k^2-a(q)]2C_{12}(q)-2\mu b(q)U_1(q) \\
&\leq 2[k^2-a(q)]U_2(q)-2\mu  b(q)U_1(q) \leq   0.
\end{align}
Here we used the inequalities $-2b(q)C_{12}(q)\leq 0$, and $\mu [k^2-a(q)]2C_{12}(q)\leq 0$, and $2\mu b(q)U_1(q)\leq 0$, that follow from the given assumptions. It follows that $\gD(k,q)\leq 0$. 

To show (b) we use the given assumptions to show that
\begin{align}
B_1 (k,q) &= 2[k^2-a(q)]U_1(q)-2b(q)C_{12}(q) \leq  2[k^2-a(q)]U_1(q)-2b(q)U_1(q) \\
&\leq  2[k^2-a(q)-b(q)]U_1(q) \leq 0, \\
B_2(k,q) &= 2[k^2-a(q)]U_2(q)-2b(q)C_{12}(q) \leq  2[k^2-a(q)]U_2(q)-2b(q)U_2(q) \\
&= 2[k^2-a(q)-b(q)]U_2(q) \leq 0.
\end{align}
The above two inequalities for $B_1(k,q)$ and $B_2(k,q)$ are  based on the assumptions $C_{12}(q)\leq U_1(q)$ and $C_{12}(q) \leq  U_2(q)$. We also show that $B_3(k,q)$ is bounded by
\begin{align}
B_3(k,q) &=  2[k^2-a(q)]U_2(q)-2b(q)C_{12}(q)+\mu [k^2-a(q)]2C_{12}(q)-2\mu b(q)U_1(q) \\
&\leq 2[k^2-a(q)]U_2(q)-2b(q)U_2(q)+\mu [k^2-a(q)]2U_1(q)-2\mu b(q)U_1(q) \\
&= 2[k^2-a(q)-b(q)]U_2(q) + 2\mu [k^2-a(q)-b(q)]U_1(q) \\
&= 2[k^2-a(q)-b(q)][U_2(q)+\mu U_1(q)].
\end{align}
Since $k^2-a(q)-b(q)<0$ and by hypothesis $U_1(q)+\mu U_2(q) \geq 0$ it follows that $B_3(k,q) \leq 0$, and consequently  $\gD(k,q)\leq 0$.
\end{proof}

\begin{proposition}
Assume that $k^2-a(q)-b(q)<0$ and $b(q)<0$. We also assume the dissipation configuration given by Eq.~\eqref{eq:GeneralDissipationMatrix} with $-1<\mu<0$ and $D_1(q)=D_2(q) \geq 0$, and  $d(q)\geq 0$ (i.e.streamfunction dissipation with extrapolated Ekman dissipation with  and symmetric small-scale dissipation). It follows that if $C_{12}(q) \leq  \min\{U_1(q),U_2(q)\}$ then $\gD(k,q) \leq 0$.
\label{prop:EkmanSpectraCond}
\end{proposition}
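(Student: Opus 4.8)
The plan is to use the symmetry $D_1(q)=D_2(q)\equiv D(q)$ to reduce the problem to two scalar inequalities. Substituting into the decomposition $\gD(k,q)=B_1(k,q)D_1(q)+B_2(k,q)D_2(q)+B_3(k,q)d(q)$ gives
\[ \gD(k,q)=[B_1(k,q)+B_2(k,q)]D(q)+B_3(k,q)d(q). \]
Because $D(q)\geq 0$ and $d(q)\geq 0$ are arbitrary and mutually independent non-negative spectra, establishing $\gD(k,q)\leq 0$ is equivalent to establishing the two inequalities $B_1(k,q)+B_2(k,q)\leq 0$ and $B_3(k,q)\leq 0$ separately. This is precisely the advantage of the symmetric case: unlike Proposition~\ref{prop:DiffEkmanSpectraCond}, where $D_1(q)$ and $D_2(q)$ are independent and each of $B_1,B_2,B_3$ must be controlled on its own, here only the sum $B_1+B_2$ is needed.

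For the first inequality I would reuse the computation from the proof of Proposition~\ref{prop:StreamfunctionAsymmetricDissipationTwo}. Summing the two coefficients collapses the cross terms to
\[ B_1(k,q)+B_2(k,q)=2[k^2-a(q)-b(q)]U(q)-2b(q)[2C_{12}(q)-U(q)], \]
which is manifestly non-positive since $k^2-a(q)-b(q)<0$, $U(q)\geq 0$, $b(q)<0$, and $2C_{12}(q)-U(q)\leq 0$. This step uses none of the hypothesis $C_{12}(q)\leq\min\{U_1(q),U_2(q)\}$, so the entire weight of the proof rests on the bound for $B_3$.

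The hard part will be showing $B_3(k,q)\leq 0$. Here both terms of $B_3(k,q)$ proportional to $C_{12}(q)$ carry positive coefficients, $-2b(q)>0$ and $2\mu[k^2-a(q)]>0$ (the latter because $\mu<0$ and $k^2-a(q)<0$). The crude estimate of Proposition~\ref{prop:DiffEkmanSpectraCond}(b), replacing $C_{12}(q)$ by $U_2(q)$ in the first term and by $U_1(q)$ in the second, would give $B_3(k,q)\leq 2[k^2-a(q)-b(q)][U_2(q)+\mu U_1(q)]$ and thus require the sign condition $U_2(q)+\mu U_1(q)\geq 0$, which is deliberately absent from the present hypotheses. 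The plan is to eliminate that condition by exploiting $C_{12}(q)\leq\min\{U_1(q),U_2(q)\}$ together with the structural bound $-b(q)<a(q)-k^2$ (equivalent to $k^2-a(q)-b(q)<0$). Concretely, I would split the positive coefficient $-2b(q)+2\mu[k^2-a(q)]$ of $C_{12}(q)$ into two non-negative pieces, estimate $C_{12}(q)\leq U_1(q)$ against one piece and $C_{12}(q)\leq U_2(q)$ against the other, and choose the split so that the net coefficient of $U_1(q)$ cancels exactly; feasibility of this choice hinges on $-1<\mu<0$. What survives is
\[ B_3(k,q)\leq 2(1+\mu)[k^2-a(q)-b(q)]U_2(q)\leq 0, \]
using $1+\mu>0$, $k^2-a(q)-b(q)<0$, and $U_2(q)\geq 0$. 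The same bound can be reached less slickly by splitting into the cases $U_1(q)\leq U_2(q)$ and $U_1(q)>U_2(q)$, bounding $C_{12}(q)$ by the minimum and then trading the leftover $U_1(q)$ for $U_2(q)$. Either way, combining $B_1+B_2\leq 0$ and $B_3\leq 0$ with $D(q),d(q)\geq 0$ yields $\gD(k,q)\leq 0$. The main obstacle is exactly this $B_3$ estimate: the asymmetry of $B_3(k,q)$ between $U_1(q)$ and $U_2(q)$ and the positive sign of its $C_{12}(q)$ coefficient force the minimum hypothesis and the inequality $k^2-a(q)-b(q)<0$ to be used jointly, neither one sufficing alone.
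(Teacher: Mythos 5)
Your proposal is correct and follows essentially the same route as the paper: the same decomposition $\gD(k,q)=[B_1(k,q)+B_2(k,q)]D(q)+B_3(k,q)d(q)$, the same bound $B_1(k,q)+B_2(k,q)=2[k^2-a(q)-b(q)]U(q)-2b(q)[2C_{12}(q)-U(q)]\leq 0$, and a $B_3$ estimate that uses $C_{12}(q)\leq U_1(q)$ and $C_{12}(q)\leq U_2(q)$ to arrive at exactly the paper's final bound $2(1+\mu)[k^2-a(q)-b(q)]U_2(q)\leq 0$. Your coefficient-splitting bookkeeping for $B_3$ is only a cosmetic reorganization of the paper's two-stage substitution.
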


\begin{proof}
Under the assumption of symmetric small-scale dissipation (i.e. $D_1(q)=D_2(q)$), we may rewrite $\gD(k,q)$ as
\begin{equation}
\gD (k,q) = [B_1(k,q)+B_2(k,q)]D(q)+B_3(k,q)d(q).
\end{equation}
 We note that from the given assumptions, we have
\begin{align}
B_1(k,q)+B_2(k,q) &= 2[k^2-a(q)]U_1(q)-2b(q)C_{12}(q)+2[k^2-a(q)]U_2(q)-2b(q)C_{12}(q) \\
&= 2[k^2-a(q)]U(q)-4b(q)C_{12}(q) \\
&= 2[k^2-a(q)-b(q)]U(q)-2b(q)[2C_{12}(q)-U(q)] \leq 0,
\end{align}
using $k^2-a(q)-b(q)<0$, $b(q)<0$, and $2C_{12}(q)-U(q) \leq 0$. We also have 
\begin{align}
B_3 (k,q)&= 2[k^2-a(q)]U_2(q)-2b(q)C_{12}(q)+\mu [k^2-a(q)]2C_{12}(q)-2\mu b(q)U_1(q) \\
&\leq 2[k^2-a(q)]U_2(q)-2b(q)U_2(q)+\mu [k^2-a(q)]2C_{12}(q)-2\mu b(q)C_{12}(q) \\
&= 2[k^2-a(q)-b(q)]U_2(q)+2\mu [k^2-a(q)-b(q)]C_{12}(q) \\
&\leq 2[k^2-a(q)-b(q)]U_2(q)+2\mu [k^2-a(q)-b(q)]U_2(q) \\
&= 2(1+\mu)[k^2-a(q)-b(q)]U_2(q) \leq 0.
\end{align}
Here, on the first line we used the assumptions $C_{12}(q) \leq U_1(q)$ and $C_{12}(q)\leq U_2(q)$ to argue that $-2b(q)C_{12}(q) \leq -2b(q)U_2(q)$ and $-2\mu b(q)U_1(q) \leq -2\mu b(q)C_{12}(q)$. The remainder of the argument continues to apply the given assumptions and it is easy to follow. We conclude that $\gD(k,q) \leq 0$.
\end{proof}

Proposition~\ref{prop:StreamfunctionAsymmetricDissipationTwo} shows that under symmetric small-scale streamfunction dissipation alone, using standard as opposed to extrapolated Ekman dissipation, the inequality $C_{12}(q)\leq U_2(q)$ implies $\gD (k,q)\leq 0$ for all wavenumbers $k<q$. We already  know that   $C_{12}(q)$ is mathematically restricted via the arithmetic-geometric mean  inequality $2|C_{12}(q)| \leq U_1(q)+U_2(q)$ over an interval of values intersecting with the constraint $C_{12}(q)\leq U_2(q)$, so the actual constraint on $C_{12}(q)$ is tighter. From Proposition~\ref{prop:DiffEkmanSpectraCond} and Proposition~\ref{prop:EkmanSpectraCond} we see that including either extrapolated Ekman dissipation or differential small-scale dissipation on top of streamfunction dissipation makes the sufficient conditions more restrictive. This is, of course, expected and consistent with the preceding discussion of the consequences of Proposition~\ref{prop:GeneralSuffCondition}. In particular, Proposition~\ref{prop:DiffEkmanSpectraCond} shows that for the more general dissipation term configuration of streamfunction dissipation with  both differential small-scale dissipation and extrapolated Ekman dissipation, if the streamfunction spectrum $C_{12}(q)$ is negative for all wavenumbers $q>k$, then the flux inequality is satisfied at  wavenumber $k$. It also shows that the restriction on the streamfunction spectrum $C_{12}(q)$ can be stretched as far as the wider inequality $C_{12}(q) \leq  \min\{U_1(q),U_2(q)\}$ if we choose to introduce the restriction $U_1(q)+\mu U_2(q)\geq 0$ on the streamfunction spectra $U_1(q)$ and $U_2(q)$. In proposition~\ref{prop:EkmanSpectraCond} we eliminate differential small-scale dissipation but retain extrapolated Ekman dissipation. This eliminates the restriction $U_1(q)+\mu U_2(q)\geq 0$ whereas   the restriction on the streamfunction spectrum $C_{12}(q)$ remains the same as in Proposition~\ref{prop:DiffEkmanSpectraCond}. This shows that the restriction $U_1(q)+\mu U_2(q)\geq 0$ originates from differential small-scale dissipation, and since $\mu$ is negative, it constitutes a non-trivial constraint on the streamfunction spectra $U_1(q)$ and $U_2(q)$. As a result, the sufficient conditions of Proposition~\ref{prop:EkmanSpectraCond} are rigorously wider than the sufficient conditions of Proposition~\ref{prop:DiffEkmanSpectraCond}.  It goes without saying that eliminating both differential small-scale dissipation and extrapolated Ekman dissipation reverts us back to Proposition~\ref{prop:StreamfunctionAsymmetricDissipationTwo}  where the stated  sufficient condition is clearly wider than that of Proposition~\ref{prop:EkmanSpectraCond}. Specifically, for $\mu=0$, the inequality $U_1(q)+\mu U_2(q)\geq 0$ reduces to the trivial inequality $U_1(q)\geq 0$. Furthermore, in the proof of proposition~\ref{prop:EkmanSpectraCond}, for $\mu=0$, we no longer need the constraint $C_{12}(q) \leq U_1(q)$ to show that $B_3(k,q)\leq 0$, and only the constraint $C_{12}(q) \leq U_2(q)$ is needed by the remainder of the proof. 

\section{Conclusions and Discussion}

We have derived rigorous sufficient conditions for satisfying the flux inequality $k^2 \Pi_E(k)-\Pi_G(k) \leq 0$ for a general $n$-layer quasi-geostrophic model with constant layer-by-layer thickness, under symmetric streamfunction dissipation. By symmetric streamfunction dissipation we mean that for every layer the dissipation term is given by the same linear Fourier-diagonal operator, applied only on the streamfunction field of the same layer.  It follows that under symmetric configurations of the dissipation terms, $n$-layer quasi-geostrophic models will indeed have a phenomenology similar to two-dimensional Navier-Stokes turbulence. Asymmetric dissipation configurations, where different dissipation operators are used on different layers, have been considered for the special case of a two-layer quasi-geostrophic model, dissipated with general streamfunction dissipation with or without extrapolated Ekman dissipation and differential small-scale dissipation. We have demonstrated that if the degree of asymmetry in the dissipation terms between the two layers is bounded as described by Proposition~\ref{prop:GeneralSuffCondition}, then the flux inequality will continue to be satisfied. Our results on the non-trivial dependence of the dissipation rate spectra of energy and potential enstrophy on the energy and potential enstrophy spectra via the streamfunction spectra, are also very relevant to the correct formulation of closure models for multi-layer quasi-geostrophic systems. 


One limitation of the current investigation is that we have disregarded the beta term, mainly to avoid the mathematical difficulties associated with the  anisotropic nature of the term.  This elimination can be tolerated, from a physical standpoint, as long as the beta term is active only in the forcing range and the baroclinic forcing at the same forcing range is powerful enough to overshadow the beta term. As long as the effect of the beta term  remains limited to large scales (i.e. planetary and synoptic scales), it will not contribute to the integrals of Eq.~\eqref{eq:EnergyFluxIntegral} and Eq.~\eqref{eq:PotentialEnstrophyFluxIntegral} and the results reported in this paper will remain entirely unaffected. The only other physical assumption inherent in these results is that forcing via the baroclinic instability is limited to large scales.  The propositions 1-5 are mathematically rigorous and do not require these assumptions, however the assumptions come into play at the very last step   where the conclusion of propositions 1-5 (i.e. $\gD (k,q) \leq  0$) is used to infer the flux inequality itself. On the other hand, no phenomenological assumptions about any spectrum are needed at any step of the argument.


For the case of the two-layer quasi-geostrophic model, we have seen that, starting from a streamfunction dissipation configuration,  adding either extrapolated Ekman dissipation or differential small-scale dissipation (or both) tends to tighten the sufficient conditions for satisfying the flux inequality. This suggests   that the flux inequality $k^2\Pi_E (k)-\Pi_G(k)\leq 0$ may be more easily violated under these more general dissipation configurations.  A violation of the flux inequality beyond a wavenumber $k_t$ would then allow a downscale energy flux large enough to result in a transition from $k^{-3}$ to $k^{-5/3}$ scaling in the energy spectrum near the wavenumber $k_t$ \cite{article:Tung:2005,article:Tung:2005:1,article:Tung:2007}. However, while there is a plausible physical motivation for using extrapolated Ekman dissipation, there is no obvious physical motivation for introducing an asymmetric configuration of the small-scale dissipation terms. We would therefore like to expand on the reasons why we believe that this is an idea worth pursuing.


Any kind of small-scale dissipation in quasi-geostrophic models is not physical but  is tolerated mainly  because it is intended to model the dissipative mechanisms that exist at smaller scales where quasi-geostrophic dynamics breaks down and three-dimensional dynamics becomes dominant. Lindborg \cite{article:Lindborg:2009} estimates that quasi-geostrophic dynamics break down at a length scale of about $100$km. However,  the scaling transition wavenumber $k_t$ of the Nastrom-Gage spectrum \cite{article:Gage:1979,article:Nastrom:1986,article:Jasperson:1984,article:Gage:1984}, and consequently the breakdown of the flux inequality, occurs at a greater length scale of about $1000$km to $700$km in wavelength, which is still within the quasi-geostrophic regime. The hypothesis underlying the quasi-geostrophic modeling of atmospheric turbulence is that the locality of the coexisting downscale potential enstrophy cascade and downscale energy cascade shields them from the three-dimensional dominated regime at the smallest scales. Both cascades are furthermore protected by the continuing conservation of potential enstrophy under the stratified turbulence dynamics that becomes dominant at scales less than $100$km.  The above considerations suggest the hypothesis that three-dimensional effects will not contaminate the nonlinear quasi-geostrophic dynamics driving the coexisting cascades of potential enstrophy and energy in the quasi-geostrophic regime, which allows us to model small-scale three-dimensional  processes, as seen from the quasi-geostrophic regime's point of view,  via small-scale hyperdiffusion terms applied to all layers. That said, there is the non-local effect that the anomalous energy sink, provided by the three-dimensional  regime, can inflict on the quasi-geostrophic regime, and that is to boost  the downscale energy dissipation rate, thereby increasing the downscale energy flux passing through the quasi-geostrophic regime of length scales and  moving  the transition wavenumber $k_t$ deep into the inertial range. This is why we propose that if differential small-scale dissipation can be shown to achieve an equivalent effect, then it should be accepted as a more realistic configuration, for modelling purposes. As we have mentioned in the introduction, the main weakness of multi-layer quasi-geostrophic models is that they disregard the surface quasi-geostrophic dynamics at the lowest layer.  Perhaps, asymmetric small-scale dissipation can be thought of as a crude way to compensate for the absence of surface quasi-geostrophic dynamics. 

\section*{Acknowledgement}

The idea of a flux inequality was first brought up in email communication between Sergey Danilov with the author and Ka-Kit Tung, in the context of two-dimensional Navier-Stokes turbulence.

%
%

\appendix

\section{Proof that $E(k)$ is always positive}
\label{app:EkPositiveDefinite}

In this appendix we  show that if the matrix $L_{\ab}(k)$  satisfies the following conditions:
\begin{align}
& L_{\ab}(k)\geq 0, \text{ for } \ga\neq\gb, \label{eq:PositiveDefiniteOne} \\ 
& \sum_{\gb} L_{\ab}(k) \leq 0, \label{eq:PositiveDefiniteTwo}
\end{align}
then the energy spectrum $E(k)$ will be always positive with $E(k)\geq 0$.

We begin by rewriting Eq.~\eqref{eq:Ek} as follows: 
\begin{align}
E(k) &= -\sum_{\ab} L_{\ab}(k)C_{\ab}(k) = -\sum_{\ga} L_{\ga\ga}(k)U_\ga (k) -\sum_{\substack{\ab \\ \ga\neq\gb}} L_{\ab}(k)C_{\ab}(k) \\
&= -\sum_{\ga} L_{\ga\ga}(k)U_\ga (k) - \frac{1}{2}\sum_{\substack{\ab \\ \ga\neq\gb}} L_{\ab}(k) [U_{\ga}(k)+U_{\gb}(k)] - \frac{1}{2}\sum_{\substack{\ab \\ \ga\neq\gb}} L_{\ab}(k) [2C_{\ab}(k)-U_{\ga}(k)-U_{\gb}(k)] \label{eq:EkPositiveFirstStep} \\
&= -\sum_{\ga} L_{\ga\ga}(k)U_\ga (k) -\frac{1}{2}\sum_{\substack{\ab \\ \ga\neq\gb}} [L_{\ab}(k)+L_{\ba}(k)]U_\ga (k)  - \frac{1}{2}\sum_{\substack{\ab \\ \ga\neq\gb}} L_{\ab}(k) [2C_{\ab}(k)-U_{\ga}(k)-U_{\gb}(k)] \label{eq:EkPositiveSecondStep} \\
&= -\sum_{\ga}\biggl[  L_{\ga\ga}(k)+ \frac{1}{2}\sum_{\substack{\gb \\ \ga\neq\gb}}[L_{\ab}(k)+L_{\ba}(k)]\biggr] U_\ga (k)   - \frac{1}{2}\sum_{\substack{\ab \\ \ga\neq\gb}} L_{\ab}(k) [2C_{\ab}(k)-U_{\ga}(k)-U_{\gb}(k)] \\
&= -\sum_{\ga}\biggl[  \sum_{\gb} L_{\ab}(k) \biggr] U_\ga (k)  - \frac{1}{2}\sum_{\substack{\ab \\ \ga\neq\gb}} L_{\ab}(k) [2C_{\ab}(k)-U_{\ga}(k)-U_{\gb}(k)].
\end{align}
The assumption $L_{\ab}(k)=L_{\ba}(k)$ is used in the key step between Eq.~\eqref{eq:EkPositiveFirstStep} and Eq.~\eqref{eq:EkPositiveSecondStep}. We note that $U_\ga (k)\geq 0$, since $U_{\ga}(k)$ is always positive, and from the arithmetic-geometric mean inequality, $2C_{\ab}(k)-U_\ga (k)-U_\gb (k)\leq 0$. Combining these with the assumptions given by Eq.~\eqref{eq:PositiveDefiniteOne} and Eq.~\eqref{eq:PositiveDefiniteTwo}, we see that both terms in our expression for $E(k)$ are positive and therefore $E(k)\geq 0$.


\section{Derivation of dissipation rate spectra}
\label{app:DissipationRateSpectra}

In this appendix, we will show that the energy dissipation rate spectrum $D_E (k)$ and the layer-by-layer potential enstrophy dissipation rate spectra $D_{G_\ga} (k)$ are given by
\begin{align}
D_E (k) &= 2 \sum_{\ab} D_{\ab} (k) C_{\ab} (k), \\ 
D_{G_\ga} (k) &= -2 \sum_{\bc} L_{\ab}(k) D_{\ac} (k) C_{\bc} (k).
\end{align}
The proof mirrors the argument used in Ref.~\cite{article:Gkioulekas:p15} to derive the energy forcing spectrum and the potential enstrophy forcing spectrum for the same model. We begin by writing the governing equation for the streamfunction field $\gy_\ga$ as
\begin{equation}
\pderiv{\psi_\ga}{t}+ \sum_\gb \ccL_{\ab}^{-1} J(\psi_\gb, q_\gb) = \sum_{\bc} \ccL_{\ab}^{-1} \ccD_{\bc} \psi_\gc + \sum_\gb \ccL_{\ab}^{-1} f_\gb. \label{eq:GovEqStreamFunc}
\end{equation}
Differentiating the streamfunction spectrum $C_{\ab} (k)$ with respect to time gives
\begin{equation}
\pderiv{C_{\ab}(k)}{t} = \innerf{\pderiv{\psi_\ga}{t}}{\psi_\gb}{k} + \innerf{\psi_\ga}{\pderiv{\psi_\gb}{t}}{k}, \label{eq:TimeDerCab}
\end{equation}
and we may write a governing equation for $C_{\ab} (k)$ in the form:
\begin{equation}
\pderiv{C_{\ab}(k)}{t} + \cT_{\ab}(k) = -\cD_{\ab}(k) + \cF_{\ab}(k). \label{eq:GovEqCab}
\end{equation}
Here, $\cT_{\ab}(k)$ is the contribution from the nonlinear Jacobian term, $\cD_{\ab}(k)$ is the contribution from the dissipation term, and $\cF_{\ab}(k)$ is the contribution from the forcing term. The dissipation term $\cD_{\ab} (k)$ can now be obtained by replacing in Eq.~\eqref{eq:TimeDerCab} the streamfunction time-derivative $\pderivin{\psi_\ga}{t}$ with the dissipation term $\sum_{\bc} \ccL_{\ab}^{-1} \ccD_{\bc} \psi_\gc$. This gives
\begin{align}
\cD_{\ab} (k) &= -\innerf{\sum_{\cd} \ccL_{\ac}^{-1} \ccD_{\cd} \psi_\gd}{\psi_\gb}{k} - \innerf{\psi_\ga}{\sum_{\cd} \ccL_{\bc}^{-1} \ccD_{\cd} \psi_\gd}{k}\\ 
&= -\sum_{\cd} [L_{\ac}^{-1}(k) D_{\cd}(k) C_{\bd} (k) + L_{\bc}^{-1}(k) D_{\cd}(k) C_{\ad} (k) ].
\end{align}
We may now easily write the dissipation rate spectra $D_E (k)$ and $D_G(k)$ by applying on $\cD_{\ab} (k)$ the linear operators indicated by Eq.~\eqref{eq:Ek} and Eq.~\eqref{eq:Gk}. We therefore find that the energy dissipation rate energy spectrum $D_E (k)$ is given by
\begin{align}
D_E (k) &= -\sum_{\ab} L_{\ab}(k) \cD_{\ab}(k) = \sum_{\abcd} [L_{\ab}(k) L_{\ac}^{-1}(k) D_{\cd}(k) C_{\bd} (k) + L_{\ab}(k) L_{\bc}^{-1}(k) D_{\cd}(k) C_{\ad} (k)] \\
&= \sum_{\bcd} \biggl[\sum_\ga L_{\ba}(k) L_{\ac}^{-1}(k)\biggr] D_{\cd}(k) C_{\bd} (k) + \sum_{\acd} \biggl[\sum_\gb L_{\ab}(k) L_{\bc}^{-1}(k)\biggr] D_{\cd}(k) C_{\ad} (k) \\
&= \sum_{\bcd} \gd_{\bc} D_{\cd}(k) C_{\bd} (k) + \sum_{\acd} \gd_{\ac} D_{\cd}(k) C_{\ad} (k) \\ 
&= \sum_{\bd} D_{\bd}(k) C_{\bd} (k) + \sum_{\cd} D_{\cd}(k) C_{\cd} (k) = 2 \sum_{\ab} D_{\ab} (k) C_{\ab} (k).
\end{align}
The layer-by-layer potential enstrophy spectrum $D_{G_\ga}(k)$ is likewise given by
\begin{align}
D_{G_\ga}(k) &= \sum_{\bc} L_{\ab} (k) L_{\ac}(k) \cD_{\bc}(k) = -\sum_{\bcde} L_{\ab} (k) L_{\ac}(k) [L_{\bd}^{-1}(k) D_{\de}(k) C_{\ce} (k) + L_{\cd}^{-1}(k) D_{\de}(k) C_{\be} (k) ] \\ 
&= -\sum_{\cde} \gd_{\ad} L_{\ac}(k) D_{\de}(k) C_{\ce} (k) - \sum_{\bde} \gd_{\ad} L_{\ab} (k) D_{\de}(k) C_{\be} (k) \\ 
&= -\sum_{\ce} L_{\ac}(k) D_{\ga\gee}(k) C_{\ce} (k) - \sum_{\be} L_{\ab} (k) D_{\ga\gee}(k) C_{\be} (k) \\ 
&= -2 \sum_{\bc} L_{\ab}(k) D_{\ac} (k) C_{\bc} (k).
\end{align}
The corresponding conservation laws read
\begin{align}
&\pderiv{E(k)}{t} + \pderiv{\Pi_E(k)}{t} = -D_E (k) + F_E (k),\\
&\pderiv{G(k)}{t} + \pderiv{\Pi_G(k)}{t} = -D_G (k) + F_G (k).
\end{align}
We see that positive $D_E(k)$ and $D_G(k)$ correspond to the case where the dissipation terms are truly dissipative. This concludes the argument.

%
%

\bibliographystyle{elsarticle-num}
\bibliography{references}

\begin{thebibliography}{10}
\expandafter\ifx\csname url\endcsname\relax
  \def\url#1{\texttt{#1}}\fi
\expandafter\ifx\csname urlprefix\endcsname\relax\def\urlprefix{URL }\fi
\expandafter\ifx\csname href\endcsname\relax
  \def\href#1#2{#2} \def\path#1{#1}\fi

\bibitem{article:Kraichnan:1967:1}
R.~Kraichnan, Inertial ranges in two dimensional turbulence, Phys. Fluids 10
  (1967) 1417--1423.

\bibitem{article:Leith:1968}
C.~Leith, Diffusion approximation for two dimensional turbulence, Phys. Fluids
  11 (1968) 671--673.

\bibitem{article:Batchelor:1969}
G.~Batchelor, Computation of the energy spectrum in homogeneous, two
  dimensional turbulence, Phys. Fluids Suppl. II 12 (1969) 233--239.

\bibitem{article:Fjortoft:1953}
R.~Fj{\o}rt{\o}ft, On the changes in the spectral distribution of kinetic
  energy for two dimensional non-divergent flow, Tellus 5 (1953) 225--230.

\bibitem{article:Warn:1975}
P.~Merilees, T.~Warn, On energy and enstrophy exchanges in two-dimensional
  non-divergent flow, J. Fluid. Mech. 69 (1975) 625--630.

\bibitem{article:Tung:2006}
E.~Gkioulekas, K.~Tung, Recent developments in understanding two-dimensional
  turbulence and the {N}astrom-{G}age spectrum, J. Low Temp. Phys. 145 (2006)
  25--57.

\bibitem{article:Tung:2007:1}
E.~Gkioulekas, K.~Tung, A new proof on net upscale energy cascade in {2D} and
  {QG} turbulence, J. Fluid. Mech. 576 (2007) 173--189.

\bibitem{thesis:Farazmand:2010}
M.~Farazmand, Controlling the dual cascade of two-dimensional turbulence,
  Master's thesis, McMaster University (2010).

\bibitem{article:Protas:2011}
M.~Farazmand, N.~Kevlahan, B.~Protas, Controlling the dual cascade of
  two-dimensional turbulence, J. Fluid. Mech. 668 (2011) 202--222.

\bibitem{article:Tung:2005}
E.~Gkioulekas, K.~Tung, On the double cascades of energy and enstrophy in two
  dimensional turbulence. {P}art 1. {T}heoretical formulation, Discrete Contin.
  Dyn. Syst. Ser. B 5 (2005) 79--102.

\bibitem{article:Tung:2005:1}
E.~Gkioulekas, K.~Tung, On the double cascades of energy and enstrophy in two
  dimensional turbulence. {P}art 2. {A}pproach to the {KLB} limit and
  interpretation of experimental evidence, Discrete Contin. Dyn. Syst. Ser. B 5
  (2005) 103--124.

\bibitem{article:Lindborg:2011}
A.~Vallgren, E.~Deusebio, E.~Lindborg, A possible explanation of the
  atmospheric kinetic and potential energy spectra, Phys. Rev. Lett. 107 (2011)
  268501.

\bibitem{article:Kurien:2011}
H.~Aluie, S.~Kurien, Joint downscale fluxes of energy and potential enstrophy
  in rotating stratified boussinesq flows, Europhys. Lett. 96 (2011) 44006.

\bibitem{article:Orlando:2003}
K.~Tung, W.~Orlando, The $k^{-3}$ and $k^{-5/3}$ energy spectrum of the
  atmospheric turbulence: quasi-geostrophic two level model simulation, J.
  Atmos. Sci. 60 (2003) 824--835.

\bibitem{article:Lindborg:2007}
E.~Lindborg, Horizontal wavenumber spectra of vertical vorticity and horizontal
  divergence in the upper troposphere and lower stratosphere, J. Atmos. Sci. 64
  (2007) 1017--1025.

\bibitem{article:Gage:1979}
K.~Gage, Evidence for a $k^{-5/3}$ law inertial range in mesoscale
  two-dimensional turbulence, J. Atmos. Sci. 36 (1979) 1950--1954.

\bibitem{article:Nastrom:1986}
K.~Gage, G.~Nastrom, Theoretical interpretation of atmospheric wavenumber
  spectra of wind and temperature observed by commercial aircraft during
  {GASP}, J. Atmos. Sci. 43 (1986) 729--740.

\bibitem{article:Jasperson:1984}
G.~Nastrom, K.~Gage, W.~Jasperson, The atmospheric kinetic energy spectrum,
  $10^0-10^4$ km, Nature 310 (1984) 36--38.

\bibitem{article:Gage:1984}
G.~Nastrom, K.~Gage, A climatology of atmospheric wave number spectra of wind
  and temperature observed by commercial aircraft, J. Atmos. Sci. 42 (1984)
  950--960.

\bibitem{article:Gkioulekas:p15}
E.~Gkioulekas, The effect of asymmetric large-scale dissipation on energy and
  potential enstrophy injection in two-layer quasi-geostrophic turbulence, J.
  Fluid. Mech. 694 (2012) 493--523.

\bibitem{article:Lebedev:1994}
G.~Falkovich, V.~Lebedev, Non-local vorticity cascade in two dimensions, Phys.
  Rev. E 49 (1994) R1800--R1803.

\bibitem{article:Lebedev:1994:1}
G.~Falkovich, V.~Lebedev, Universal direct cascade in two dimensional
  turbulence, Phys. Rev. E 50 (1994) 3883--3899.

\bibitem{article:Yakhot:1999}
V.~Yakhot, Two dimensional turbulence in the inverse cascade range, Phys. Rev.
  E 60 (1999) 5544--5551.

\bibitem{article:Procaccia:2002}
V.~L'vov, A.~Pomyalov, I.~Procaccia, Quasi-gaussian statistics of hydrodynamic
  turbulence in $3/4+\epsilon$ dimensions, Phys. Rev. Lett. 89 (2002) 064501.

\bibitem{article:Gkioulekas:2008:1}
E.~Gkioulekas, Locality and stability of the cascades of two-dimensional
  turbulence, Phys. Rev. E 78 (2008) 066302.

\bibitem{article:Gkioulekas:p14}
E.~Gkioulekas, Dissipation range and anomalous sinks in steady two-dimensional
  turbulence, Phys. Rev. E 82 (2010) 046304.

\bibitem{article:Vergassola:2000}
G.~Boffetta, A.~Celani, M.~Vergassola, Inverse energy cascade in two
  dimensional turbulence: deviations from gaussian behavior, Phys. Rev. E 61
  (2000) R29--R32.

\bibitem{article:Tabeling:2002}
P.~Tabeling, Two dimensional turbulence: a physicist approach, Phys. Rep. 362
  (2002) 1--62.

\bibitem{article:Salmon:1980}
R.~Salmon, Baroclinic instability and geostrophic turbulence, Geophys.
  Astrophys. Fluid Dyn. 15 (1980) 167--211.

\bibitem{book:Salmon:1998}
R.~Salmon, Lectures on Geophysical Fluid Dynamics, Oxford University Press, New
  York, 1998.

\bibitem{article:Arbic:2007}
R.~Scott, B.~Arbic, Spectral energy fluxes in geostrophic turbulence:
  {I}mplications for ocean energetics, J. Atmos. Sci. 37 (2007) 673--688.

\bibitem{article:Waite:2013}
M.~Waite, Potential enstrophy in stratified turbulence, J. Fluid. Mech. 722
  (2013) R4.

\bibitem{article:Salmon:1978}
R.~Salmon, Two-layer quasi-geostrophic turbulence in a simple special case,
  Geophys. Astrophys. Fluid Dyn. 10 (1978) 25--52.

\bibitem{article:Tung:1998}
W.~Welch, K.~Tung, On the equilibrium spectrum of transient waves in the
  atmosphere, J. Atmos. Sci. 55 (1998) 2833--2851.

\bibitem{article:Evensen:1994}
G.~Evensen, Inverse methods and data assimilation in nonlinear ocean models,
  Physica D 77 (1994) 108--129.

\bibitem{article:Smith:2004}
K.~Smith, Comment on: {T}he $k^{-3}$ and $k^{-5/3}$ energy spectrum of the
  atmospheric turbulence: quasi-geostrophic two level model simulation, J.
  Atmos. Sci. 61 (2004) 937--942.

\bibitem{article:Tung:2004}
K.~Tung, Reply to comments by {K}. {S}hafer {S}mith, J. Atmos. Sci. 61 (2004)
  943--948.

\bibitem{article:Tung:2007}
E.~Gkioulekas, K.~Tung, Is the subdominant part of the energy spectrum due to
  downscale energy cascade hidden in quasi-geostrophic turbulence?, Discrete
  Contin. Dyn. Syst. Ser. B 7 (2007) 293--314.

\bibitem{article:Lindborg:2009}
E.~Lindborg, Two comments on the surface quasigeostrophic model for the
  atmospheric energy spectrum, J. Atmos. Sci. 66 (2009) 1069--1072.

\end{thebibliography}
\end{document}